\documentclass[conference]{IEEEtran}
\IEEEoverridecommandlockouts

\usepackage{cite}
\usepackage{amsmath,amssymb,amsfonts,amsthm}
\usepackage{algorithmic}
\usepackage{graphicx}
\usepackage{textcomp}
\usepackage{xcolor}
\usepackage{multirow} 
\usepackage{subfigure}
\usepackage{hyperref}
\usepackage{bm}
\usepackage{verbatim}
\usepackage{booktabs}
\usepackage[linesnumbered,boxed,ruled,commentsnumbered]{algorithm2e}
\usepackage{balance}
\usepackage{endnotes}
\usepackage{bbding}
\usepackage{enumitem}

\newcommand{\corrmark}{\textsuperscript{\Envelope}}

\newtheorem{myDef}{Definition}

\newtheorem{myPro}{Proposition}

\usepackage{amsthm}

\theoremstyle{plain}
\newtheorem{myLem}{Lemma}

\def\BibTeX{{\rm B\kern-.05em{\sc i\kern-.025em b}\kern-.08em
    T\kern-.1667em\lower.7ex\hbox{E}\kern-.125emX}}

\begin{document}

\title{Risk-aware Road-network Trajectory Synthesis under Local Differential Privacy}

\author{Hongzhen Xiang$^{1}$, Chenxu Wang$^{1,3}\corrmark$, Yumeng Song$^{2}$, Tao Qin$^{3}$, Yushuai Li$^{2}$, Pinghui Wang$^{3}$,  Wei Wang$^{1}$, 

Tianyi Li$^{2}$\\
 {\large$^{1 }$}School of Software Engineering, Xi’an Jiaotong University, Xi’an, China\\
 {\large$^{2}$} Department of Computer Science, Aalborg University, Aalborg, Denmark   \\
 {\large$^{3}$} MoE Key Lab of Intelligent Networks and Network Security, Xi’an Jiaotong University, Xi’an, China   \\
{\large$^{1}$} hzxiang@stu.xjtu.edu.cn, 
{\large$^{1,3}$}cxwang@mail.xjtu.edu.cn,  
{\large$^{2}$} \{yumengs, yusli, tianyi\}@cs.aau.dk,\\
{\large$^{3}$} \{qin.tao, phwang\}@mail.xjtu.edu.cn,
{\large$^{1}$} wei.wang@mail.xjtu.edu.cn
}

\maketitle

\begingroup
\renewcommand{\thefootnote}{\Envelope}
\footnotetext[1]{Corresponding author.}
\endgroup

\begin{abstract}
Road-network trajectory synthesis under local differential privacy (LDP) enables mobility data analysis without uploading raw trajectories to an untrusted server. Existing locally private trajectory synthesis methods mainly model whole trajectories, point sequences, or grid-level mobility patterns. However, they cannot handle non-uniform privacy risks caused by local road-network and trajectory contexts, such as endpoints, long-stay-associated regions, and low-out-degree branches. They also do not directly protect conditional next-hop choices on road networks, where a user's selected successor may reveal sensitive local movement intentions. 
To address these challenges, we propose \textsc{TrajRACE}, a locally private synthesis framework for road-network trajectories. \textsc{TrajRACE} uses start, segment-count, and transition reports to represent trajectories. For each transition, it keeps the current road segment as a public aggregation context and perturbs the true successor among legal successors on the client side. It then assigns risk-aware protection strengths to transitions according to local road-network and trajectory contexts. Finally, the server recovers population-level start, segment-count, and conditional transition statistics from heterogeneous privatized reports, and generates synthetic trajectories that follow the public road-network topology.
Experiments on three road-network trajectory datasets demonstrate the effectiveness of \textsc{TrajRACE} in improving synthetic-data utility across distributional, query-based, and pattern-level metrics, while providing stronger protection for high-risk local transitions. 
\end{abstract}

\begin{IEEEkeywords}
Road-network trajectory synthesis, local differential privacy, risk-aware protection, heterogeneous recovery
\end{IEEEkeywords}

\section{Introduction}\label{sec:introduction}
With the widespread adoption of smartphones, in-vehicle navigation devices, and location-aware terminals, trajectory data have become an important foundation for traffic monitoring~\cite{fan2013traffic,li2021trace,li2022evolutionary}, route planning, location-based services, and urban computing~\cite{jiang2021lbsurvey,zhang2024facility,zheng2014urban,hu2024similarity,li2020uncertain}. However, trajectory data contain continuous and highly personalized mobility behaviors. Such data may reveal users' activity regions, travel habits, and social relations~\cite{montjoye2013unique,chang2018privacy}. Under an untrusted server setting, directly uploading raw trajectories can cause serious privacy risks. This risk may remain even after anonymization, aggregation, or statistical release~\cite{pyrgelis2018membership,wang2018untrusted,ahuja2023spatiotemporal,jin2022survey}. 

Local differential privacy (LDP) has become a widely adopted solution for trajectory privacy protection. Given a set of trajectories, LDP~\cite{dwork2006dp,cormode2018ldp,xiong2020ldpsurvey} protects trajectory data on the client side before uploading. A raw trajectory is decomposed into local data units, such as trajectory points, directions, fragments, starts, transitions, or lengths. Each unit is perturbed and uploaded as a privatized report. Thus, the server cannot directly access raw trajectories and only observes privatized reports. The server aggregates these reports to estimate population-level mobility statistics and generate synthetic trajectories for downstream analysis. Existing studies~\cite{cunningham2021,zhang2023,hu2024} improve LDP-based trajectory synthesis by extracting useful mobility information from perturbed data. Cunningham \textit{et al}.~\cite{cunningham2021} incorporate public knowledge to improve the realism and utility of synthesized trajectories. Zhang \textit{et al}.~\cite{zhang2023} leverage adjacent direction information and adaptive spatial restrictions to improve trajectory collection under pure $\epsilon$-LDP. Hu \textit{et al}.~\cite{hu2024} propose RetraSyn, which supports real-time trajectory synthesis from privately extracted mobility patterns. 

\begin{figure}[t]
  \centering
  \includegraphics[width=\linewidth]{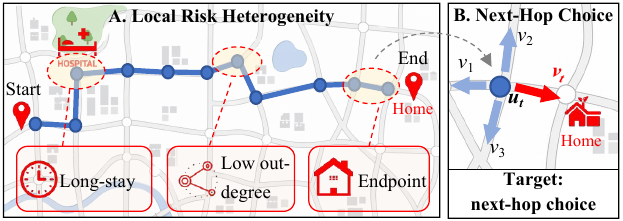}
  \vspace{-6mm}
  \caption{Example of non-uniform risks and next-hop protection.}
  \label{fig:motivation}
  \vspace{-7mm}
\end{figure}

However, several key challenges have yet to be addressed.

\noindent\textit{Challenge I: How to handle non-uniform privacy risks within a trajectory?}
{ Existing LDP-based trajectory synthesis methods~\cite{cunningham2021,zhang2023,hu2024} usually apply similar protection strength to different trajectory points or local transitions.} However, privacy risks are not evenly distributed within a real road-network trajectory. As shown in Fig.~\ref{fig:motivation}(A), a trajectory may pass through an ordinary main road, move near a hospital, enter a low-out-degree branch, and finally stop near a residential area. {A transition near a hospital-related area} may reveal a health-related visit, the next hop after the low-out-degree branch may narrow down the destination, and the residential endpoint may expose the user's home region. In contrast, ordinary main-road { transitions} usually only indicate common commuting flows. If all {transitions} within a trajectory receive the same protection strength, high-risk {transitions} may be under-protected. For example, the server may still recover a clear group-level transition pattern to the hospital area from many perturbed next-{hop} choices. This does not reveal one user's exact trajectory, but it may expose sensitive mobility patterns around the hospital. Meanwhile, low-risk main-road { transitions} may be over-perturbed, which scatters common commuting flows to incorrect successors. As a result, treating all {transitions} in a trajectory equally may weaken the protection of sensitive movements and reduce the utility of synthesized trajectories.

\noindent\textit{Challenge II: How to protect local next-hop decisions?}
A user's next-hop choice at a branching road segment may reveal sensitive information. As shown in Fig.~\ref{fig:motivation}(B), at time $t$,  {given the current road segment $u_t$, the} user has four {possible successors,} i.e., $v_1$, $v_2$, $v_3$, and $v_t$. These  {successors} may have different privacy risks. For example, $v_1$, $v_2$, and $v_3$ may correspond to common public routes, while $v_t$ may point to the user's home area. If the recovered transition distribution assigns a high probability to $v_t$, {the next-hop pattern may} become more predictable. This may reveal sensitive local movement tendencies without revealing complete trajectories. In trajectory synthesis under LDP, this predictability may remain after aggregation. Although each user's data is perturbed, a dominant next-hop pattern may still appear after many perturbed samples are aggregated. The synthesized trajectories may then reproduce this sensitive direction too often. Thus, protected synthetic data may still reveal where users are likely to go after a branching road segment, especially when the next hop leads to sensitive places such as homes, hospitals, or private activity regions.

To address the above challenges, we propose \textsc{TrajRACE}, a locally private trajectory synthesis framework for road-network trajectories under LDP. \textsc{TrajRACE} protects users' next-hop choices under road-network context and adapts protection to non-uniform privacy risks within trajectories. It then generates privacy-preserving synthetic trajectories from locally perturbed {trajectory reports}.
To address the first challenge, \textsc{TrajRACE} handles non-uniform privacy risks within trajectories by assigning different protection strengths to different transitions. It evaluates transition risks using road-network topology and local trajectory context. High-risk transitions, such as movements { near sensitive regions, low-out-degree branches, or endpoints,}  receive stronger protection. Low-risk transitions, such as ordinary pass-through movements on main roads, avoid unnecessary perturbation. This helps protect sensitive local movements while preserving common mobility patterns.
To address the second challenge, \textsc{TrajRACE} protects the true next-hop choice under a given road-segment context. For each transition, it keeps the current road segment as {a public} aggregation context, but randomizes the chosen successor among legal successors on the client side. This allows the server to estimate transition distributions for each current road segment, while preventing it from directly observing the true next hop. This {mitigates the direct disclosure} of sensitive next-hop choices and supports trajectory synthesis from protected {conditional} transition statistics.

We summarize our contributions as follows:
\begin{itemize}[leftmargin=3mm]
\item We propose \textsc{TrajRACE}, a locally private framework for road-network trajectory synthesis under the LDP setting. It generates road-network-consistent synthetic trajectories from privatized {reports}, while protecting next-hop choices and handling non-uniform privacy risks within trajectories.

\item We design a risk-aware event-level protection mechanism {for transition reports} to handle non-uniform privacy risks. It uses road-network topology and local trajectory context to assign stronger protection to high-risk transitions.

\item We develop a bucket-corrected heterogeneous recovery and synthetic release method. It recovers population-level start, segment-count, and conditional transition statistics from privatized reports and generates road-network-consistent synthetic trajectories.

\item We conduct experiments on three road-network trajectory datasets. The results show that \textsc{TrajRACE} improves synthetic-data utility, reduces exposure of high-risk successors, and validates the effectiveness of the key designs.
\end{itemize}

This paper is organized as follows.
Section~\ref{sec:related_work} reviews related work.
Section~\ref{sec:preliminary} presents the preliminaries and problem definition.
Section~\ref{sec:methodology} presents the \textsc{TrajRACE} framework and analyzes its properties and complexity.
Section~\ref{sec:experiments} reports the experimental results.
Section~\ref{sec:conclusion} concludes the paper.
\section{Related Work}\label{sec:related_work}

\subsection{Privacy-preserving Trajectory Synthesis}
Privacy-preserving trajectory synthesis can be studied under centralized differential privacy (DP) and LDP settings. 

Centralized DP assumes that a trusted server can access raw data and then releases protected statistics or synthetic data.  Existing studies generate synthetic data by learning protected population-level statistics from raw data under this setting~\cite{he2015dpt,bindschaedler2016,gursoy2018adatrace,cai2021dpmrf,ge2021kamino,zhang2014privbayes,mckenna2022aim}. These studies include DP trajectory synthesis, graphical-model-based synthesis, Bayesian-network-based synthesis, and general private database synthesis. They improve synthetic data quality by modeling correlations among attributes, locations, or mobility patterns. PrivBayes~\cite{zhang2014privbayes} uses Bayesian networks for private tabular data synthesis, AdaTrace~\cite{gursoy2018adatrace} generates synthetic trajectories from DP-protected mobility patterns, and AIM~\cite{mckenna2022aim} adaptively selects useful measurements for private data synthesis. These studies show that synthetic data can preserve useful statistics under DP. However, these methods require a trusted data collector, while this paper considers an untrusted server setting where users protect data before uploading.

In contrast, LDP does not require a trusted server. Users perturb their trajectory data locally, and the server only aggregates the privatized {reports} for analysis and synthesis. Under the LDP setting, studies usually focus on two related problems. The first is how users perturb trajectory data locally and how the server collects useful privatized data~\cite{du2023ldptrace,yang2022collect,ren2022ldpids,wang2021continuous,li2023streaming,erlingsson2014rappor,wang2017ldpfrequency}. The key is to design LDP mechanisms that protect users' locations, transitions, and trajectory lengths while preserving enough information for analysis. Cunningham \textit{et al}.~\cite{cunningham2021} incorporate public knowledge to improve the realism and utility of LDP trajectory release. Zhang \textit{et al}.~\cite{zhang2023} study trajectory data collection under strict pure $\epsilon$-LDP while reducing the reliance on external knowledge. The second problem is how to recover statistics from privatized data and generate synthetic data. Hu \textit{et al}.~\cite{hu2024} propose RetraSyn for real-time trajectory synthesis from private trajectory streams. R.Du \textit{et al}.~\cite{du2025privacyfree} study statistical recovery from LDP-perturbed data collected by multiple services, aiming to improve mean and distribution estimation without consuming additional privacy budget.

These studies improve privacy-preserving trajectory synthesis through public knowledge, local perturbation, online synthesis, and statistical recovery. However, they do not explicitly protect next-hop choices under road-network context, nor do they adapt protection to non-uniform privacy risks along a trajectory. Sensitive movements may remain exposed, while ordinary movements may be over-perturbed. {\textsc{TrajRACE} centers on conditional next-hop choices and supports risk-aware transition protection, bucket-corrected heterogeneous recovery, and road-network-consistent synthetic release.}

\subsection{Non-uniform Trajectory Privacy Protection}
Non-uniform privacy protection aims to avoid assigning the same protection strength to all data units~\cite{du2021ahead,zhang2018calm,cao2017temporal,xiao2015temporal,schaler2023wevent,jorgensen2015personalized,he2014blowfish}. The main idea is to give stronger protection to sensitive units and weaker protection to less sensitive units, thereby improving the privacy-utility trade-off. Existing studies define non-uniformity from different perspectives. Wei \textit{et al}.~\cite{wei2024aaa} focus on adaptive perturbation for users. Their AAA mechanism adjusts the local perturbation strength under different conditions to improve overall utility. L.Du \textit{et al}.~\cite{du2025personalized} focus on temporal non-uniformity. They study personalized \(w\)-event privacy, where different events or time windows can have different privacy requirements. R.Du \textit{et al}.~\cite{du2025privacyfree} focus on non-uniformity in statistical recovery. The LDP-perturbed data collected by different services may follow different perturbation mechanisms and privacy budgets. They distinguish such differences during recovery and reuse these perturbed data to improve mean and distribution estimation without extra privacy cost.

However, these studies mainly model non-uniformity from users, events, time windows, services, or perturbation mechanisms. They do not address non-uniform privacy risks within road-network trajectories. In a trajectory, local movements may carry different privacy risks. Existing methods do not explicitly model road-network-aware intra-trajectory transition risks. \textsc{TrajRACE} addresses this issue by measuring transition risks from road-network topology and trajectory context, and by giving stronger protection to high-risk transitions.

\section{PRELIMINARIES}\label{sec:preliminary}

 \subsection{Road-segment Trajectory}

\begin{myDef}
A \textbf{raw trajectory} is a time-ordered sequence of GPS points, denoted as $T^r=\langle p_1,p_2,\ldots,p_m\rangle$, where each point $p_i=(x_i,y_i,t_i)$ records the longitude $x_i$, the latitude $y_i$, and the timestamp $t_i$ of a moving object. The timestamps satisfy $t_i<t_{i+1}$ for $1\leq i<m$.
\end{myDef}

\begin{myDef}
A \textbf{road network} is modeled as a directed graph $G=(V,\mathcal E)$, where $V$ is a set of road junctions and $\mathcal E\subseteq V\times V$ is a set of directed road segments. Each road segment $e=(v_i,v_j)\in\mathcal E$ starts at junction $v_i$ and ends at junction $v_j$.
\end{myDef}

Map-matching~\cite{newson2009hidden} aligns a raw trajectory with the road network and constrains the movement of the object to road segments. It maps each GPS point to a network-constrained location. The trajectory is represented on the road network.

\begin{myDef}
Given a road segment $e=(v_i,v_j)\in\mathcal E$, a \textbf{mapped location} is denoted as $l=(e,\delta,t)$, where $e$ is the road segment containing $l$, $\delta$ is the network distance from the start junction $v_i$ to $l$ along $e$, $0\leq\delta\leq |e|$, and $t$ is the timestamp inherited from the original GPS point. Here, $|e|$ is the length of road segment $e$.
\end{myDef}

\begin{myDef}
Given a raw trajectory $T^r=\langle p_1,p_2,\ldots,p_m\rangle$, its \textbf{map-matched trajectory} is a time-ordered sequence of mapped locations, denoted $T^m=\langle l_1,l_2,\ldots,l_m\rangle$, where $l_k$ is the mapped location of the GPS point $p_k$ on the road network.
\end{myDef}

\begin{myDef}
Given a road network $G=(V,\mathcal E)$ and a road segment $e=(v_i,v_j)\in\mathcal E$, the \textbf{legal successor set} of $e$ is denoted as $\mathcal N(e)$. It contains all road segments that can be directly reached after traveling through $e$, i.e., $\mathcal N(e)=\{(v_j,v_k)\in\mathcal E \mid v_k\in V\}$.
\end{myDef}

\begin{myDef}
Given a map-matched trajectory $T^m=\langle l_1,l_2,\ldots,l_m\rangle$, its \textbf{road-segment trajectory} is denoted as $T=\langle l_s,g_1,g_2,\ldots,g_n,l_d\rangle$, where $l_s=l_1$, $l_d=l_m$, and each $g=(e,\tau)$ is a time-stamped road segment. For each $g_i$, $g_i.e$ denotes its road segment and $g_i.\tau$ denotes its timestamp. The road-segment sequence $\langle g_1.e,g_2.e,\ldots,g_n.e\rangle$ is obtained by merging consecutive mapped locations in $T^m$ that lie on the same road segment, while preserving their travel order. For $1\leq i<n$, $g_{i+1}.e\in\mathcal N(g_i.e)$. The timestamp $g_i.\tau$ is the average timestamp of all mapped locations in the corresponding consecutive group on road segment $g_i.e$.
\end{myDef}

\subsection{Problem Statement}

\begin{myDef}
Given a road-segment trajectory
$T=\langle l_s,g_1,g_2,\ldots,g_n,l_d\rangle$ and a public segment bound $L_m$, the \textbf{segment-bounded subtrajectory (SBS) set}  of $T$ is obtained by partitioning $T$ into subtrajectories, each covering at most $L_m$ road segments. It is denoted as
$\mathcal S(T)=\{S_1,S_2,\ldots,S_q\}$, where
$S_j=\langle l_{s,j},g_{a_j},g_{a_j+1},\ldots,g_{r_j},l_{d,j}\rangle$
$(1\leq j\leq q)$. Here, $l_{s,j}$ and $l_{d,j}$ are the start and destination mapped locations of $S_j$, and $g_{a_j},g_{a_j+1},\ldots,g_{r_j}$ are the time-stamped road segments covered by $S_j$ in their travel order. For $1\leq j\leq q$, the number of covered time-stamped road segments satisfies $r_j-a_j+1\leq L_m$. If $l_{s,j}$ or $l_{d,j}$ does not correspond to an original mapped location in $T^m$, its timestamp is set to the timestamp of the time-stamped road segment on which it lies.
\end{myDef}

\begin{myDef}
Given a segment-bounded subtrajectory
$S_j=\langle l_{s,j},g_{a_j},g_{a_j+1},\ldots,g_{r_j},l_{d,j}\rangle$,
its \textbf{local reporting units} are denoted as
$\mathcal I(S_j)=\{\textit{st}_j,\textit{cnt}_j,\mathcal X_j\}$, where
$\textit{st}_j=g_{a_j}.e$ is the start road segment containing $l_{s,j}$,
$\textit{cnt}_j=r_j-a_j+1$ is the number of time-stamped road segments in $S_j$, and
$\mathcal X_j=\langle (g_i.e,g_{i+1}.e)\rangle_{i=a_j}^{r_j-1}$ is the ordered transition-unit sequence. For each transition unit $(g_i.e,g_{i+1}.e)$, $g_i.e$ is treated as a disclosed public conditioning context, while $g_{i+1}.e\in\mathcal N(g_i.e)$ is the protected true successor.
\end{myDef}

\begin{myDef}
Given a domain $\mathcal X$, an algorithm $\mathcal M$ satisfies \textbf{$\epsilon$-local differential privacy} ($\epsilon$-LDP), where $\epsilon\geq0$ is the privacy budget, if and only if for any $o,o'\in\mathcal X$ and output $z$:
\begin{equation}
    \Pr(\mathcal M(o)=z)\le e^\epsilon \Pr(\mathcal M(o')=z),
\end{equation}
\end{myDef}

The privacy budget $\epsilon$ bounds how much the output of a perturbation mechanism can reveal about the user's true input. A smaller $\epsilon$ makes different input values harder to distinguish from the perturbed output, and provides privacy protection.

\begin{myDef}
Given a trajectory dataset $\mathcal D$ on a road network $G=(V,\mathcal E)$ and a public segment bound $L_m$, each trajectory $T\in\mathcal D$ is first partitioned into its SBS set $\mathcal S(T)$, yielding $\mathcal D^b=\bigcup_{T\in\mathcal D}\mathcal S(T)$. The \textbf{population-level mobility statistics} of $\mathcal D$ are defined as $\Phi(\mathcal D)=\{p_s,\,p_c,\,\{p_e\}_{e\in\mathcal U}\}$, where $p_s$ is the distribution of start units $\textit{st}_j$, and $p_c$ is the distribution of segment-count units $\textit{cnt}_j$ for all $S_j\in\mathcal D^b$. The set $\mathcal U\subseteq\mathcal E$ contains all road segments that appear as current segments in transition units. For each $e\in\mathcal U$, $p_e$ denotes the successor distribution of $e$, defined as $p_e(e')=\Pr(e'\mid e)$ for $e'\in\mathcal N(e)$.
\end{myDef}

\begin{myDef}
Given a public road network $G=(V,\mathcal E)$, a public segment bound $L_m$, an SBS reporting-accounting cap $B$, and a trajectory dataset $\mathcal D$ collected on the clients, let $\mathcal D^b=\bigcup_{T\in\mathcal D}\mathcal S(T)$ be the corresponding SBS dataset. The \textbf{conditional next-hop private trajectory synthesis (CNPTS)} problem is to design a client-side perturbation mechanism $\mathcal M$ and a server-side synthesis method $\mathcal F$ such that:
\begin{enumerate}[label=(\roman*), leftmargin=11pt]
\item For each $S_j\in\mathcal D^b$, let $R_j=|\mathcal X_j|$. The client applies $\mathcal M$ to its local reporting units $\mathcal I(S_j)=\{\textit{st}_j,\textit{cnt}_j,\mathcal X_j\}$ and generates privatized reports $\mathcal M(S_j)=\{\hat{st}_j,\hat{cnt}_j,\{\rho_{j,t}\}_{t=1}^{R_j}\}$. The start report $\hat{st}_j$ satisfies $\epsilon_s$-LDP for $\textit{st}_j$, and the segment-count report $\hat{cnt}_j$ satisfies $\epsilon_c$-LDP for $\textit{cnt}_j$. For each transition unit $(g_i.e,g_{i+1}.e)\in\mathcal X_j$, $g_i.e$ is a disclosed public conditioning context and the true successor $g_{i+1}.e$ is protected by a conditional transition-report mechanism.

\item For each $S_j$, if $k_{j,t}$ is the execution bucket of the $t$-th transition report, its reporting-accounting cost is $\epsilon_x^{(k_{j,t})}=\epsilon_e^{(k_{j,t})}+\epsilon_b$, and the realized SBS reporting-accounting cost satisfies $\epsilon_s+\epsilon_c+\sum_{t=1}^{R_j}\epsilon_x^{(k_{j,t})}\le B$.

\item The server only receives privatized reports and generates a synthetic dataset $\mathcal D^\ast=\mathcal F\!\left(\{\mathcal M(S_j)\mid S_j\in\mathcal D^b\}\right)$. The synthetic dataset $\mathcal D^\ast$ consists of road-segment trajectories on $G$, and
$\Phi(\mathcal D^\ast)$ is as close as possible to $\Phi(\mathcal D)$.
\end{enumerate}
\end{myDef}

\section{METHODOLOGY}\label{sec:methodology}

\begin{figure*}[htbp]
  \centering
  \includegraphics[width=0.9\linewidth,height=0.285\textheight]{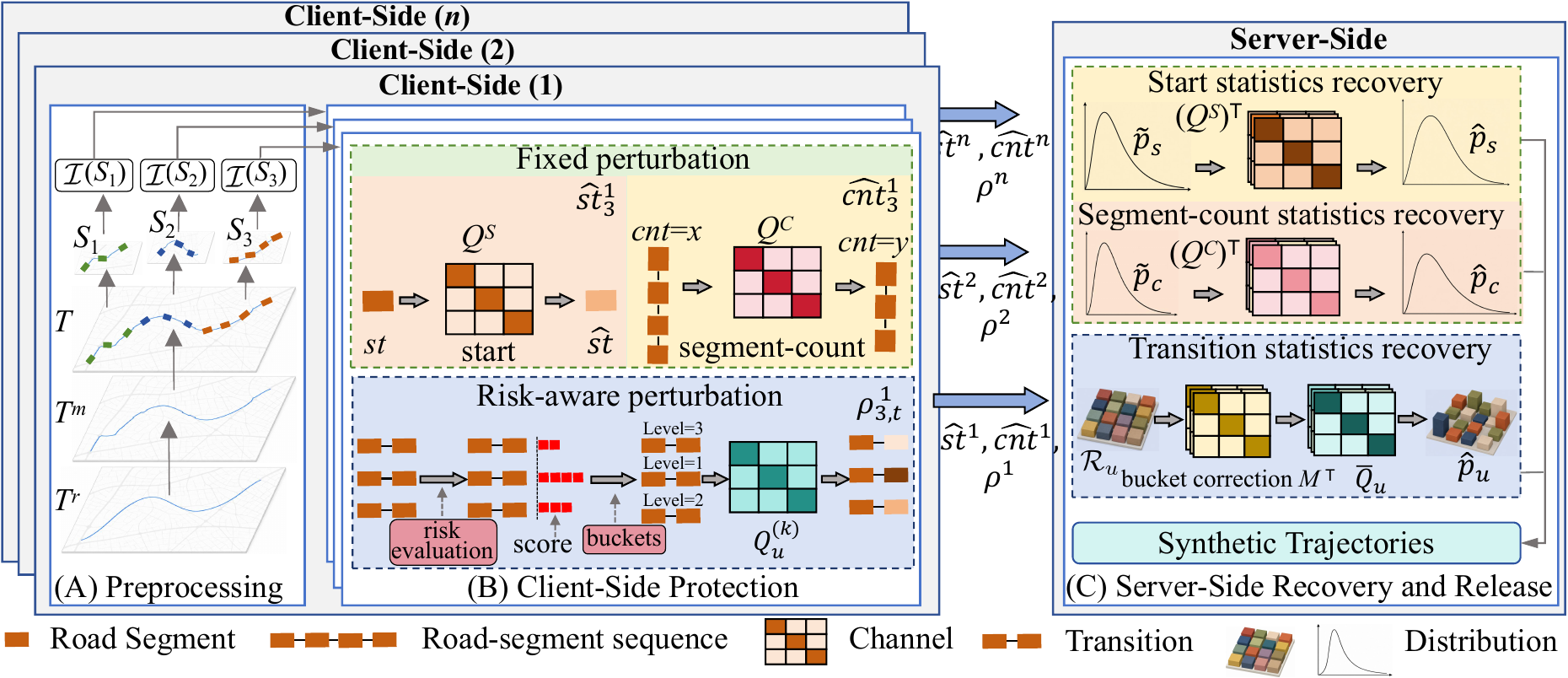}
  \vspace{-10pt}
  \caption{{Framework of \textsc{TrajRACE}: (A) trajectory preprocessing, (B) client-side protection, and (C) server-side recovery and release.}}
  \vspace{-20pt}
  \label{fig:framework}
\end{figure*}

\subsection{Method Overview}

As shown in Fig.~\ref{fig:framework}, \textsc{TrajRACE} consists of three modules: trajectory preprocessing and protection on the client side, followed by recovery and release on the server side.

\noindent\textbf{Trajectory Preprocessing.}
Given a trajectory $T^r$, \textsc{TrajRACE} map-matches it~\cite{newson2009hidden} to obtain $T^m$, converts it into a road-segment trajectory $T$, and partitions $T$ into $\mathcal S(T)$ under the public segment bound $L_m$. For each $S_j\in\mathcal S(T)$, it extracts the local reporting units $\mathcal I(S_j)=\{\textit{st}_j,\textit{cnt}_j,\mathcal X_j\}$. We detail this module in Section~\ref{sec:eventization}.

\noindent\textbf{Client-Side Protection.}
\textsc{TrajRACE} applies GRR to $\textit{st}_j$ and $\textit{cnt}_j$, and risk-aware conditional perturbation to each $(g_i.e,g_{i+1}.e)\in\mathcal X_j$, where $g_i.e$ is the public current segment and $g_{i+1}.e$ is the protected true successor. It outputs privatized start, segment-count, and transition reports. We detail this module in Section~\ref{sec:client_protection}.

\noindent\textbf{Server-Side Recovery and Release.}
The server recovers population-level start, segment-count, and conditional transition statistics from privatized reports, and generates a road-network-consistent synthetic dataset $\mathcal D^\ast$. We detail this module in Section~\ref{sec:server_recovery}.

\subsection{Trajectory Preprocessing}\label{sec:eventization}
For a raw trajectory $T^r$, \textsc{TrajRACE} does not use all GPS points for synthesis. Instead, it extracts a small set of local reporting units that are sufficient for recovering population-level mobility statistics. 
As shown in Fig.~\ref{fig:framework}(A), \textsc{TrajRACE} first map-matches $T^r$ to the public road network and obtains a map-matched trajectory $T^m$. It then converts $T^m$ into a road-segment trajectory $T$ and partitions $T$ into an SBS set $\mathcal S(T)$ under the public segment bound $L_m$. For each SBS $S_j\in\mathcal S(T)$, \textsc{TrajRACE} extracts its local reporting units $\mathcal I(S_j)=\{\textit{st}_j,\textit{cnt}_j,\mathcal X_j\}$, including the start road segment, the number of road segments in the SBS, and local transitions. These units are privacy-sensitive because they may reveal where a user starts, how many road segments the user travels within an SBS, and which next-hop choices the user makes. They are also useful for statistical recovery because they directly correspond to the statistics needed for synthetic trajectory generation. \textsc{TrajRACE} perturbs these units on the client side and uploads only the resulting privatized reports.

\subsection{Client-Side Protection}\label{sec:client_protection}
As shown in Fig.~\ref{fig:framework}(B), for each SBS
$S_j=\langle l_{s,j},g_{a_j}, g_{a_j+1}, \\ \ldots,g_{r_j},  l_{d,j}\rangle$,
\textsc{TrajRACE} perturbs its local reporting units
$\mathcal I(S_j)=\{\textit{st}_j,\textit{cnt}_j,\mathcal X_j\}$ and generates privatized reports for uploading to the server. \textsc{TrajRACE} uses two types of perturbation mechanisms: (i) fixed LDP mechanisms for start unit $\textit{st}_j$ and segment-count unit $\textit{cnt}_j$; and (ii) a risk-aware perturbation mechanism for  transition units \((g_i.e,g_{i+1}.e)\in\mathcal X_j\).

\subsubsection{Fixed perturbation for start and segment-count units}

Since the start unit $\textit{st}_j$ and the segment-count unit $\textit{cnt}_j$ of an SBS $S_j$ have fixed risk levels, \textsc{TrajRACE} uses generalized randomized response (GRR)~\cite{wang2017ldpfrequency}, an LDP perturbation mechanism, to privatize them over fixed discrete domains. GRR randomizes a value through a perturbation channel $Q$, which specifies the probability of keeping the true value or replacing it with another candidate value. In GRR, the true value is kept with a higher probability, while all other candidate values share the same lower probability. Thus, the uploaded value may be the true value or a randomly selected alternative value.

For each SBS $S_j$, \textsc{TrajRACE} designs two GRR channels for the two units: $Q^s$ for the start unit $\textit{st}_j$ and $Q^c$ for the segment-count unit $\textit{cnt}_j$. For a start-unit value \(\textit{st}\), the start perturbation channel \(Q^s\) is defined as:
\begin{equation}
Q^s(\textit{st},\widehat{\textit{st}}) =
\begin{cases}
\alpha_s, & \widehat{\textit{st}} = \textit{st} \\[6pt]
\dfrac{1-\alpha_s}{\lvert\mathcal E_s\rvert - 1}, & \widehat{\textit{st}} \neq \textit{st}
\end{cases},  
\quad
\alpha_s = \dfrac{e^{\epsilon_s}}{e^{\epsilon_s} + \lvert\mathcal E_s\rvert - 1}, 
\label{eq:start_rr}
\end{equation}
where $\mathcal E_s\subseteq\mathcal E$ is a set of public road segments serving as start segments, $\widehat{\textit{st}}\in\mathcal E_s$ is the privatized start report sampled from this set, and $\epsilon_s$ is the privacy budget for start-unit perturbation.

For a segment-count unit $\textit{cnt}$, the segment-count perturbation channel $Q^c$ is defined as
\begin{equation}
Q^c(\textit{cnt},\widehat{\textit{cnt}}) =
\begin{cases}
\alpha_c, & \widehat{\textit{cnt}} = \textit{cnt}, \\[6pt]
\dfrac{1-\alpha_c}{\lvert\mathcal C\rvert - 1}, & \widehat{\textit{cnt}} \neq \textit{cnt},
\end{cases}
\quad
\alpha_c = \dfrac{e^{\epsilon_c}}{e^{\epsilon_c} + \lvert\mathcal C\rvert - 1}.
\label{eq:segment_count_rr}
\end{equation}
where $\mathcal C=\{1,2,\ldots,L_m\}$ is the possible segment-count values of an SBS, $\widehat{\textit{cnt}}\in\mathcal C$ is the privatized segment-count report, and $\epsilon_c$ is the privacy budget for segment-count perturbation.

\subsubsection{Risk-aware perturbation mechanism}
For each transition unit $(g_i.e,g_{i+1}.e)\in\mathcal X_j$ of an SBS $S_j$, next-hop choices may carry different privacy risks. Therefore, \textsc{TrajRACE} applies risk-aware protection to each transition unit. It treats $g_i.e$ as the public current segment and protects the true successor $g_{i+1}.e$ on the client side.
The mechanism has three steps. First, \textsc{TrajRACE} computes a local risk score for each transition unit to measure its privacy exposure. Second, it maps the risk score to a privacy bucket and checks whether the bucket is feasible under the remaining transition-report allowance of $S_j$, which is governed by the SBS reporting-accounting cap $B$. Third, it perturbs the true successor within the legal successor set and generates a privatized transition report.

\noindent\textbf{Event-level risk-aware scoring.}
\textsc{TrajRACE} measures the privacy risk of each transition unit from three aspects: endpoint proximity, long-stay association, and low out-degree. 

\noindent\emph{\underline{Endpoint proximity}} captures the risk caused by being close to the start or destination of an  {SBS}. For $S_j$, the start and destination road segments are {\(g_{a_j}.e\) and \(g_{r_j}.e\)}, respectively. The endpoint-proximity risk of transition unit  {\((g_i.e,g_{i+1}.e)\in\mathcal X_j\) is defined as follows:
\begin{equation}
r_i^{\mathrm{e}} = \max \left(
\exp\left(-\frac{d(g_i.e,g_{a_j}.e)}{\sigma_{\mathrm{e}}}\right), 
\exp\left(-\frac{d(g_i.e,g_{r_j}.e)}{\sigma_{\mathrm{e}}}\right)
\right),
\label{eq:risk_endpoint}
\end{equation}}
where $d(\cdot,\cdot)$ denotes the distance between two road segments and $\sigma_e>0$ is a distance-decay parameter. The closer {\(g_i.e\)} is to the start or destination road segment, the larger $r_i^e$ becomes.

\noindent\emph{\underline{Long-stay association}} captures the risk caused by being close to a long-stay location. 
{Given a stay threshold \(\Delta_{\mathrm{st}}>0\), for an SBS \(S_j\), stay-associated anchors are detected from adjacent time-stamped road segments whose time gap satisfies \(g_{i+1}.\tau-g_i.\tau\geq \Delta_{\mathrm{st}}\). The stay-anchor set is defined as:}
\begin{equation}
\mathcal A_{\mathrm{st}}(S_j)=\bigcup_{\substack{a_j\leq i<r_j\\g_{i+1}.\tau-g_i.\tau\geq\Delta_{\mathrm{st}}}}\{g_i.e,g_{i+1}.e\}.
\label{eq:stay_anchor}
\end{equation}

The long-stay association risk is defined as
\begin{equation}
r_i^{\mathrm{s}} =\begin{cases}\exp \left( -\dfrac{\min_{c \in \mathcal A_{\mathrm{st}}(S_j)} d(g_i.e,c)}{\sigma_{\mathrm{st}}} \right), & \mathcal A_{\mathrm{st}}(S_j) \neq \varnothing, \\[6pt]
0, & \mathcal A_{\mathrm{st}}(S_j) = \varnothing,
\end{cases}
\label{eq:risk_stay}
\end{equation}
where $c$ denotes a stay anchor and $\sigma_{\mathrm{st}}>0$ is a distance-decay parameter. When stay anchors exist, a transition unit closer to these anchors receives a larger $r_i^{\mathrm{s}}$. If no stay anchor exists, this term is set to zero.

\noindent\emph{\underline{Low out-degree}} captures the risk caused by limited next-hop choices. If the current road segment \(g_i.e\) has only a few legal successors, the true successor {\(g_{i+1}.e\)} is easier to infer. The low-out-degree risk is defined as

\begin{equation}
r_i^{\mathrm{d}}=\frac{1}{|\mathcal N(g_i.e)|},
\label{eq:risk_degree}
\end{equation}
where \(\mathcal N(g_i.e)\) denotes the legal successor set of \(g_i.e\). A smaller legal successor set gives a larger $r_i^{\mathrm{d}}$.

For each transition unit {\((g_i.e,g_{i+1}.e)\in\mathcal X_j\)}, \textsc{TrajRACE} combines the three terms into a composite exposure score:
\begin{equation}
r_i=\lambda_e r_i^e+\lambda_s r_i^s+\lambda_d r_i^d,
\label{eq:risk_score}
\end{equation}
where $\lambda_e,\lambda_s,\lambda_d\ge 0$ are the weights of the three risk terms and satisfy $\lambda_e+\lambda_s+\lambda_d=1$. A larger $r_i$ indicates a higher-risk transition unit and gives stronger protection in later scheduling.

\noindent\textbf{Privacy buckets and budget-feasible scheduling.}
After obtaining the risk score $r_i$ for each transition unit {\((g_i.e,g_{i+1}.e)\in\mathcal X_j\)}, \textsc{TrajRACE} determines how strongly the true successor {\(g_{i+1}.e\)} should be protected. It first maps the risk score to a target privacy bucket. It then checks whether this bucket is feasible under the remaining transition-report allowance of the SBS under the reporting-accounting cap $B$.

\noindent\emph{\underline{Bucket construction.}}
\textsc{TrajRACE} introduces $K\ge2$ ordered privacy buckets for transition units. Their successor-perturbation parameters satisfy $\epsilon_e^{(1)}<\epsilon_e^{(2)}<\cdots<\epsilon_e^{(K)}$. Since $\epsilon_b$ is fixed across buckets, the corresponding reporting-accounting costs $\epsilon_x^{(k)}=\epsilon_e^{(k)}+\epsilon_b$ have the same ordering. Bucket $1$ provides the strongest protection and bucket $K$ provides the weakest protection. This maps continuous risk scores to discrete protection levels.

\begin{myDef}
Given $K$ ordered privacy buckets and $K-1$ decreasing risk thresholds $\theta_1>\theta_2>\cdots>\theta_{K-1}$, the \textbf{target bucket} of a transition unit \((g_i.e,g_{i+1}.e)\in\mathcal X_j\) is denoted as $b_i$. It is determined by its risk score $r_i$ as
\begin{equation}
b_i=
\begin{cases}
1, & r_i\ge \theta_1,\\
k, & \theta_k\le r_i<\theta_{k-1},\quad k=2,\ldots,K-1,\\
K, & r_i<\theta_{K-1}.
\end{cases}
\label{eq:target_bucket}
\end{equation}
\end{myDef}
When the transition units of $S_j$ are processed in travel order, $b_{j,t}$ denotes the target bucket $b_i$ associated with the $t$-th unit.

\noindent\emph{\underline{Budget-feasible scheduling.}}
The execution bucket of each transition unit is selected under the SBS reporting-accounting cap $B$. After accounting for the fixed start- and segment-count allocations $\epsilon_s$ and $\epsilon_c$, the transition-report allowance is $B-\epsilon_s-\epsilon_c$. A transition under bucket $k$ is assigned reporting-accounting cost $\epsilon_x^{(k)}=\epsilon_e^{(k)}+\epsilon_b$, where $\epsilon_e^{(k)}$ and $\epsilon_b$ are the successor-perturbation and bucket-label privacy parameters, respectively. Before reporting, \textsc{TrajRACE} checks $B\ge\epsilon_s+\\ \epsilon_c+(L_m-1)\epsilon_x^{(1)}$; otherwise, the configuration is rejected.

\textsc{TrajRACE} processes transition units in their travel order. Let $R_j=|\mathcal X_j|$ be the number of transition units in $S_j$. Before processing the $t$-th transition unit, the remaining transition-report allowance is
\begin{equation}
\epsilon_{x,j,t}^{\mathrm{rem}} = B-\epsilon_s-\epsilon_c - \sum_{h=1}^{t-1}\epsilon_x^{(k_{j,h})},
\label{eq:remaining_transition_budget}
\end{equation}
where $k_{j,h}$ is the execution bucket selected for the $h$-th previous transition unit.

To reserve allowance for the remaining $R_j-t$ transition units, the feasible bucket set for the $t$-th transition unit is
\begin{equation}
\mathcal K_{j,t}
= \{ k \mid \epsilon_x^{(k)}+(R_j-t)\epsilon_x^{(1)} \le \epsilon_{x,j,t}^{\mathrm{rem}} \}.
\label{eq:feasible_bucket}
\end{equation}
Here, $\epsilon_x^{(k)}$ is the reporting-accounting cost of the transition report, and $(R_j-t)\epsilon_x^{(1)}$ is the reserved allowance for the remaining transition reports under the strongest bucket.

Given the target bucket $b_{j,t}$ and the feasible set $\mathcal K_{j,t}$, \textsc{TrajRACE} selects the execution bucket as
\begin{equation}
k_{j,t}=
\begin{cases}
b_{j,t}, & b_{j,t}\in\mathcal K_{j,t},\\
\max\{k\in\mathcal K_{j,t}\mid k\le b_{j,t}\}, & b_{j,t}\notin\mathcal K_{j,t}.
\end{cases}
\label{eq:execution_bucket}
\end{equation}
If the target bucket is feasible, it is used directly. Otherwise, \textsc{TrajRACE} selects the closest feasible bucket that is no weaker than the target bucket. 

\noindent\textbf{Conditional privatization and noisy bucket reporting.}
After determining the execution bucket $k_{j,t}$ for the $t$-th transition unit $(g_i.e,g_{i+1}.e)\in\mathcal X_j$ of SBS $S_j$, \textsc{TrajRACE} uses a GRR-based conditional perturbation channel to protect the true successor $g_{i+1}.e$. Unlike start and segment-count perturbation, this channel is conditioned on the public current road segment $g_i.e$. Therefore, the randomization is performed over the legal successor set $\mathcal N(g_i.e)$, rather than over all road segments.

Let $m_{j,t}=|\mathcal N(g_i.e)|$ be the number of legal successors of $g_i.e$. When $m_{j,t}>1$, the conditional perturbation channel $Q_{g_i.e}^{(k_{j,t})}$ is defined as
\begin{equation}
Q_{g_i.e}^{(k_{j,t})}(g_{i+1}.e,y)=
\begin{cases}
p_{j,t}, & y=g_{i+1}.e \\[4pt]
\dfrac{1-p_{j,t}}{m_{j,t}-1}, & y\neq g_{i+1}.e,\ y\in\mathcal N(g_i.e)
\end{cases},
\label{eq:conditional_successor_channel}
\end{equation}
where $y\in\mathcal N(g_i.e)$ is the privatized successor report and $p_{j,t}=\dfrac{e^{\epsilon_e^{(k_{j,t})}}}{e^{\epsilon_e^{(k_{j,t})}}+m_{j,t}-1}$.
The channel reports the true successor with probability $p_{j,t}$ and otherwise reports another legal successor uniformly at random. A smaller $\epsilon_e^{(k_{j,t})}$ gives stronger protection. When $m_{j,t}=1$, the next hop is determined by the public road topology, so \textsc{TrajRACE} outputs $y_{j,t}=g_{i+1}.e$.

Only perturbing the successor may still leak side information, because directly reporting the true execution bucket {\(k_{j,t}\)} can reveal the relative sensitivity of the transition unit. To reduce this leakage, the client further randomizes the bucket label. The bucket-confusion channel is defined as
\begin{equation}
M(k,\tilde{k})=\begin{cases}\alpha_b, & \tilde{k}=k,\\[4pt]\dfrac{1-\alpha_b}{K-1}, & \tilde{k}\neq k,
\end{cases}
\label{eq:bucket_confusion}
\end{equation}
where \(k,\tilde{k}\in\{1,\ldots,K\}\), $k$ is the true execution bucket, $\tilde{k}$ is the noisy bucket label, and
$\alpha_b=\dfrac{e^{\epsilon_b}}{e^{\epsilon_b}+K-1}$. Conditioned on the selected execution bucket, successor perturbation and bucket-label randomization use independent local randomness.

Therefore, for the {\(t\)-th transition unit \((g_i.e,g_{i+1}.e)\in\mathcal X_j\) of SBS \(S_j\)}, the client uploads the privatized transition report
$\rho_{j,t}=(g_i.e,y_{j,t},\tilde{k}_{j,t})$,
where \(g_i.e\) is the public current segment, \(y_{j,t}\) is the perturbed successor, and {\(\tilde{k}_{j,t}\) is the noisy bucket label generated from \(k_{j,t}\)}. The report contains no user or trajectory identifiers and does not explicitly encode event order. It retains only the event-level information needed for server-side recovery.

\subsection{Server-Side Recovery and Release}\label{sec:server_recovery}
As shown in Fig.~\ref{fig:framework}(C), the server receives only privatized reports from clients, including start reports $\widehat{\textit{st}}$, segment-count reports $\widehat{\textit{cnt}}$, and transition reports $\rho_{j,t}=(g_i.e,y_{j,t},\tilde{k}_{j,t})$. The server aggregates these reports and estimates their observed distributions. It then corrects the perturbation noise to recover population-level mobility statistics $\widehat{\Phi}=\{\widehat p_s,\widehat p_c, \{\widehat p_e\}_{e\in\mathcal U}\}$, including start-unit, segment-count-unit, and transition distributions. Based on the recovered statistics, the server generates synthetic trajectories by sampling start segments and segment counts, and extending each trajectory step by step along legal successors in the road network.

\subsubsection{Population-level mobility statistics recovery}
The server recovers the statistics using two recovery procedures, because start and segment-count reports and transition reports are generated by different perturbation mechanisms.

\noindent\emph{\underline{Start and segment-count statistics recovery.}}
Since start reports and segment-count reports are generated by GRR channels $Q^s$ and $Q^c$, the server recovers the population-level start-unit and segment-count-unit distributions from the observed report distributions. Let $\mathbf p_s$ and $\mathbf p_c$ denote the vector forms of the true start-unit and segment-count-unit distributions, respectively. Let $\widetilde{\mathbf p}_s$ and $\widetilde{\mathbf p}_c$ denote the empirical observed distributions computed from privatized start reports and privatized segment-count reports, respectively. Since the rows of each perturbation channel correspond to true values and the columns correspond to reported values, the expected observed distributions are $(Q^s)^\top \mathbf p_s$ and $(Q^c)^\top \mathbf p_c$, respectively.

Following standard GRR-based frequency estimation~\cite{wang2017ldpfrequency}, the server recovers the start-unit distribution $\widehat{\mathbf p}_s$ and the segment-count-unit distribution $\widehat{\mathbf p}_c$ by solving 
\begin{equation} 
\widehat{\mathbf p}_s=\arg\min_{\mathbf p\in\Delta^{|\mathcal E_s|-1}} \bigl\|\widetilde{\mathbf p}_s-(Q^s)^\top\mathbf p\bigr\|_2^2.
\label{eq:recover_start} 
\end{equation} 
\begin{equation}
\widehat{\mathbf p}_c=\arg\min_{\mathbf p\in\Delta^{|\mathcal C|-1}} \bigl\|\widetilde{\mathbf p}_c-(Q^c)^\top\mathbf p\bigr\|_2^2.
\label{eq:recover_count} 
\end{equation} 
where $\mathcal E_s\subseteq\mathcal E$ is a set of public road segments that may serve as start segments, $\mathcal C=\{1,2,\ldots,L_m\}$ is the set of possible segment-count values of an SBS, and $\Delta^{d-1}$ denotes the probability simplex over $d$ categories. 

\noindent\emph{\underline{Transition statistics recovery.}}
To recover the {conditional} transition statistics $\{\widehat p_e\}_{e\in\mathcal U}$, the server needs to correct two types of noise in transition reports. The first comes from successor perturbation, which changes the true successor into a privatized successor. The second comes from noisy bucket reporting, which hides the execution bucket used for perturbation. To handle these two noise sources, \textsc{TrajRACE} designs a bucket-corrected recovery procedure. 

First, the server estimates the  execution-bucket mixture under each public current segment. 
\begin{myDef}
Given \(K\) execution buckets and a public current road segment \(u\in\mathcal U\), the \textbf{bucket mixture vector} under \(u\) is denoted as
\(\boldsymbol\pi_u=(\pi_{u,1},\ldots,\pi_{u,K})\).
Here, \(\pi_{u,k}\) denotes the proportion of transition reports under current segment \(u\) whose true execution bucket is \(k\).
The vector satisfies \(\boldsymbol\pi_u\in\Delta^{K-1}\), i.e.,
\(\pi_{u,k}\ge 0\) and \(\sum_{k=1}^{K}\pi_{u,k}=1\).
\end{myDef}

Since the true execution bucket is not directly uploaded, the server cannot observe \(\boldsymbol\pi_u\) directly. Instead, it observes noisy bucket labels generated by the bucket-confusion channel \(M\). \(M(k,\widetilde{k})=\Pr(\widetilde{k}\mid k)\) uses rows for true execution buckets and columns for noisy bucket labels.

For a given public current road segment \(u\in\mathcal U\), the server collects all transition reports whose public current segment is \(u\): \(\mathcal R_u=\{\rho_{j,t}=(g_i.e,y_{j,t},\widetilde{k}_{j,t})\mid g_i.e=u\}.\)
Let \(\widetilde{\boldsymbol\pi}_{\mathrm{glob}}\) denote the empirical noisy bucket-label distribution over all transition reports, and let \(\widetilde{\boldsymbol\pi}_u\) denote the empirical noisy bucket-label distribution in \(\mathcal R_u\). Under the bucket-confusion channel, the expected noisy bucket-label distribution induced by a true execution-bucket mixture \(\boldsymbol\pi\) is \(M^\top\boldsymbol\pi\). Therefore, the server estimates the global true execution-bucket mixture and the direct context-specific execution-bucket mixture by solving
\begin{equation}
\widehat{\boldsymbol\pi}_{\mathrm{glob}}
=
\arg\min_{\boldsymbol\pi\in\Delta^{K-1}}
\bigl\|M^\top\boldsymbol\pi-\widetilde{\boldsymbol\pi}_{\mathrm{glob}}\bigr\|_2^2,
\label{eq:bucket_correction_global}
\end{equation}
\begin{equation}
\check{\boldsymbol\pi}_u
=
\arg\min_{\boldsymbol\pi\in\Delta^{K-1}}
\bigl\|M^\top\boldsymbol\pi-\widetilde{\boldsymbol\pi}_u\bigr\|_2^2,
\label{eq:bucket_correction_local}
\end{equation}
where \(\widehat{\boldsymbol\pi}_{\mathrm{glob}}\) is the recovered global execution-bucket mixture, and \(\check{\boldsymbol\pi}_u\) is the direct local estimate of the execution-bucket mixture under current segment \(u\).

The estimate {\(\widehat{\boldsymbol\pi}_{\mathrm{glob}}\)} captures the global {execution-bucket mixture}, while \(\check{\boldsymbol\pi}_u\) is the direct {local estimate under current segment \(u\)}. Since local inversion may be unstable when the sample size is small, we further shrink the local estimate toward the global one:
{\begin{equation}
\widehat{\boldsymbol\pi}_u = \alpha_u\check{\boldsymbol\pi}_u + (1-\alpha_u)\widehat{\boldsymbol\pi}_{\mathrm{glob}},
\qquad
\alpha_u=\frac{N_u}{N_u+\tau_\pi},
\label{eq:bucket_shrinkage}
\end{equation}}
where {\(N_u=|\mathcal R_u|\)} is the number of transition reports under {current segment \(u\)}, and \(\tau_\pi>0\) is a shrinkage parameter. The resulting \(\widehat{\boldsymbol\pi}_u\) estimates the true bucket mixture {used to construct the mixed perturbation channel for recovering the conditional successor distribution}. In constructing the recovery channel, we treat \(\widehat{\boldsymbol\pi}_u\) as a context-level approximation to the heterogeneous perturbation composition under the same public current segment \(u\). This avoids estimating separate bucket mixtures for individual successors, which would be unreliable when transition reports are sparse.

Second, the server builds an effective mixed successor-perturbation channel for each public current segment. Reports under the same current segment may come from different execution buckets, and each bucket has a different perturbation channel. Treating all reports as if they came from one fixed channel would introduce recovery bias. Therefore, \textsc{TrajRACE} mixes the bucket-specific channels using {the stabilized execution-bucket mixture \(\widehat{\boldsymbol\pi}_u\)}.

{After estimating the stabilized execution-bucket mixture \(\widehat{\boldsymbol{\pi}}_u\), the server constructs the mixed successor-perturbation channel for the public current segment \(u \in \mathcal U\) as
\begin{equation}
\bar{Q}_u = \sum_{k=1}^K \widehat{\pi}_{u,k} Q_u^{(k)}.
\label{eq:effective_channel}
\end{equation}
where \(\widehat{\pi}_{u,k}\) is the \(k\)-th entry of \(\widehat{\boldsymbol{\pi}}_u\), representing the estimated mixture weight of bucket \(k\) under current segment \(u\). \(Q_u^{(k)}\) is the bucket-specific successor-perturbation channel over the legal successor set \(\mathcal N(u)\). Following the perturbation-channel convention, rows correspond to true successors and columns correspond to perturbed successors; hence \((Q_u^{(k)})^\top \boldsymbol p_u\) gives the expected observed distribution induced by the true successor distribution \(\boldsymbol p_u\) under bucket \(k\). The weighted channel \(\bar Q_u\) serves as the effective mixed successor-perturbation channel under current segment \(u\), and is used for subsequent statistical recovery of the conditional successor distribution.}

Finally, the server estimates the conditional successor distribution under each public current segment \(u\). It compares the observed distribution of privatized successors with the distribution expected from the effective mixed successor-perturbation channel \(\bar Q_u\), and solves a constrained least-squares problem to estimate the underlying conditional successor distribution.

Let \(\widetilde{\boldsymbol p}_u\) denote the empirical distribution of perturbed successors \(y_{j,t}\) observed in \(\mathcal R_u\), and let \(\boldsymbol p_u\) denote the vector form of the population-level conditional successor distribution under current segment \(u\), corresponding to \(p_e\) when \(e=u\). The server estimates this distribution by solving

\begin{equation}
\widehat{\boldsymbol p}_u=\arg\min_{\boldsymbol z\in\Delta^{|\mathcal N(u)|-1}}
\bigl\|\widetilde{\boldsymbol p}_u-\bar Q_u^\top\boldsymbol z\bigr\|_2^2,
\label{eq:recover_transition}
\end{equation}
where \(\widehat{\boldsymbol p}_u\) is the estimated conditional successor distribution vector under current segment \(u\). By applying this procedure to all \(u\in\mathcal U\), the server obtains the recovered conditional transition statistics \(\{\widehat p_e\}_{e\in\mathcal U}\). Bucket correction and mixed-channel modeling are necessary; otherwise, treating all transition reports as generated by a single fixed successor-perturbation channel would introduce systematic statistical recovery bias.

\subsubsection{Synthetic trajectory generation}
After obtaining the recovered statistics 
$\widehat{\Phi}=\{\widehat p_s,\widehat p_c,\{\widehat p_e\}_{e\in\mathcal U}\}$, 
the server generates synthetic trajectories by sampling from these distributions. For each synthetic trajectory, it first samples a starting road segment $e_1^\ast$ from $\widehat p_s$ and a target segment count $c^\ast$ from $\widehat p_c$, forming the initial sequence $S^\ast=\langle e_1^\ast\rangle$. It then extends the sequence autoregressively. If the current segment is $e_t^\ast$ and $\widehat p_{e_t^\ast}$ is available, the next segment is sampled as $e_{t+1}^\ast \sim \widehat p_{e_t^\ast}$.

Since $\widehat p_{e_t^\ast}$ is defined over the legal successor set $\mathcal N(e_t^\ast)$, each sampled transition follows the public road-network topology. The generation stops when the number of generated road segments reaches $c^\ast$, or when no recovered transition distribution is available for the current segment. Repeating this process gives the released synthetic dataset $\mathcal D^\ast=\{S_1^\ast,\ldots,S_{N^\ast}^\ast\}$, where $N^\ast$ is the prescribed number of synthetic trajectories.

\subsection{Privacy and Accounting Properties}
\label{sec:analysis}

\noindent\textbf{Conditional privacy of transition reports.} For a transition report, suppress the SBS and transition indices and write $U=g_i.e$, $V=g_{i+1}.e$, $Y=y_{j,t}$, and $\widetilde K=\tilde{k}_{j,t}$. Fix a disclosed current segment $U=u$. Let $\xi=(V,C)$ denote the private local input, where $C$ contains any private context affecting bucket selection, and let $k=\kappa_u(\xi)$ be the execution bucket. Let $\mathcal M_u$ denote the joint mechanism for $(Y,\widetilde K)$ conditioned on $U=u$, and define $E=\max_{1\le k\le K}\epsilon_e^{(k)}$. Since $\epsilon_x^{(k)}=\epsilon_e^{(k)}+\epsilon_b$, we have $E+\epsilon_b=\max_k\epsilon_x^{(k)}$.

\begin{myLem}
\label{lem:conditional_transition}
For any two admissible private inputs $\xi=(V,C)$ and $\xi'=(V',C')$ sharing the same disclosed current segment $U=u$, and any output $(y,\tilde{k})$,
\begin{equation}
\frac{\Pr[\mathcal M_u(\xi)=(y,\tilde{k})]}{\Pr[\mathcal M_u(\xi')=(y,\tilde{k})]}\le e^{E+\epsilon_b}.
\label{eq:conditional_transition_privacy}
\end{equation}
Hence, conditioned on $U=u$, $(Y,\widetilde K)$ satisfies $(E+\epsilon_b)$-LDP over the private input $\xi=(V,C)$.
\end{myLem}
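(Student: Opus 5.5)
The plan is to write the output probability as a product using the conditional independence built into the mechanism, and then bound each factor separately. Fix $u$ and $m=|\mathcal N(u)|$. Given $\xi$, the execution bucket $k=\kappa_u(\xi)$ is a deterministic function of the private input: the bucket schedule in \eqref{eq:execution_bucket} uses only the risk score and the remaining allowance, which are deterministic. Successor perturbation and bucket-label randomization use independent randomness given $k$. Hence
\begin{equation*}
\Pr[\mathcal M_u(\xi)=(y,\tilde k)] = Q_u^{(k)}(V,y)\, M(k,\tilde k),
\end{equation*}
and the same holds for $\xi'$ with $k'=\kappa_u(\xi')$. The ratio in \eqref{eq:conditional_transition_privacy} therefore splits as
\begin{equation*}
\frac{Q_u^{(k)}(V,y)}{Q_u^{(k')}(V',y)}\cdot\frac{M(k,\tilde k)}{M(k',\tilde k)}.
\end{equation*}

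The bucket-label factor is straightforward. By \eqref{eq:bucket_confusion}, every entry of $M$ lies in $\{\alpha_b,(1-\alpha_b)/(K-1)\}$. The ratio of these two values is $e^{\epsilon_b}$, so the second factor is at most $e^{\epsilon_b}$. If $m=1$, then $Y=V$ is forced and $V=V'$ is the only admissible successor, so the first factor equals $1$ and the bound follows.

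The main obstacle is the successor factor for $m>1$, because the two inputs may use different buckets, $k\neq k'$. The plan is to sandwich both probabilities between a common pair of constants. Write $p_k=e^{\epsilon_e^{(k)}}/(e^{\epsilon_e^{(k)}}+m-1)$. Every entry of $Q_u^{(k)}$ is either $p_k$ or $(1-p_k)/(m-1)=1/(e^{\epsilon_e^{(k)}}+m-1)$.

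For the upper bound, each entry satisfies
\begin{equation*}
Q_u^{(k)}(V,y)\le \frac{e^{\epsilon_e^{(k)}}}{e^{\epsilon_e^{(k)}}+m-1}\le \frac{e^{E}}{e^{E}+m-1},
\end{equation*}
since $x\mapsto x/(x+m-1)$ is increasing.

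For the lower bound, each entry satisfies
\begin{equation*}
Q_u^{(k')}(V',y)\ge \frac{1}{e^{\epsilon_e^{(k')}}+m-1}\ge \frac{1}{e^{E}+m-1}.
\end{equation*}

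Dividing the upper bound by the lower bound gives a first factor of at most $e^{E}$. Multiplying the two factor bounds yields $e^{E+\epsilon_b}$, which is \eqref{eq:conditional_transition_privacy}. This establishes $(E+\epsilon_b)$-LDP over $\xi$ conditioned on $U=u$, using $E+\epsilon_b=\max_k\epsilon_x^{(k)}$. If the scheduling were randomized instead, the same bound would follow, since a mixture over $k$ of quantities each bounded in $[1/(e^E+m-1),\,e^E/(e^E+m-1)]$ stays in that interval; the bucket factor would then be handled jointly inside the same mixture.
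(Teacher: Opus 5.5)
Your proposal is correct and follows essentially the same route as the paper. You factor $\Pr[\mathcal M_u(\xi)=(y,\tilde k)]=Q_u^{(k)}(V,y)\,M(k,\tilde k)$ by conditional independence, and you bound the successor ratio by $e^{E}$ by placing every entry of every $Q_u^{(\ell)}$ in $[1/(e^E+m-1),\,e^E/(e^E+m-1)]$, which handles $k\neq k'$. You then bound the bucket-label ratio by $e^{\epsilon_b}$ and treat $m=1$ separately. Your closing remark on randomized scheduling goes beyond what the paper needs, but it is also valid, since each per-bucket product lies in a common interval whose endpoint ratio is $e^{E+\epsilon_b}$.
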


\begin{proof}
Let $k=\kappa_u(\xi)$, $k'=\kappa_u(\xi')$, and $m=|\mathcal N(u)|>1$. Since $\epsilon_e^{(\ell)}\le E$ for every bucket $\ell$, every entry of $Q_u^{(\ell)}$ lies in $[q_E,p_E]$, where $p_E=e^E/(e^E+m-1)$ and $q_E=1/(e^E+m-1)$. Thus, $Q_u^{(k)}(V,y)/Q_u^{(k')}(V',y)\le p_E/q_E=e^E$. Moreover, Eq.~(\ref{eq:bucket_confusion}) gives $M(k,\tilde{k})/M(k',\tilde{k})\le e^{\epsilon_b}$. Conditioned on the selected bucket, successor perturbation and bucket-label randomization use independent local randomness, so $\Pr[\mathcal M_u(\xi)=(y,\tilde{k})]=Q_u^{(k)}(V,y)M(k,\tilde{k})$. Multiplying the two bounds gives Eq.~(\ref{eq:conditional_transition_privacy}). If $|\mathcal N(u)|=1$, the successor is determined by the public topology, so the successor-channel ratio is $1$ and the same bound still holds.
\end{proof}

\noindent\textbf{SBS reporting-accounting feasibility.} For an SBS $S_j$, let $R_j=|\mathcal X_j|$ denote its number of transition reports. Since $S_j$ contains at most $L_m$ road segments, $R_j\le L_m-1$. A transition executed under bucket $k$ has reporting-accounting cost $\epsilon_x^{(k)}=\epsilon_e^{(k)}+\epsilon_b$. Only configurations satisfying the public admissibility condition $B\ge\epsilon_s+\epsilon_c+(L_m-1)\epsilon_x^{(1)}$ are admitted for private reporting.

\begin{myPro}
\label{prop:sbs_accounting}
Under the above admissibility condition, $\mathcal K_{j,t}\neq\varnothing$ for every $t=1,\ldots,R_j$, the execution rule is well defined, and
\begin{equation}
\epsilon_s+\epsilon_c+\sum_{t=1}^{R_j}\epsilon_x^{(k_{j,t})}\le B.
\label{eq:sbs_accounting_cap}
\end{equation}
\end{myPro}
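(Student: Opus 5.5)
The plan is to prove, by induction on $t$, the invariant
\begin{equation*}
\epsilon_{x,j,t}^{\mathrm{rem}}\ge (R_j-t+1)\,\epsilon_x^{(1)},
\end{equation*}
and then read off nonemptiness, well-definedness, and the cap from it.

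\emph{Base case.} At $t=1$ the remaining allowance is $B-\epsilon_s-\epsilon_c$. Since $S_j$ covers at most $L_m$ segments, $R_j\le L_m-1$. The admissibility condition then gives
\begin{equation*}
B-\epsilon_s-\epsilon_c\ge (L_m-1)\epsilon_x^{(1)}\ge R_j\,\epsilon_x^{(1)},
\end{equation*}
where the last step uses $\epsilon_x^{(1)}\ge 0$. This nonnegativity should be noted explicitly: the parameters are privacy budgets, so $\epsilon_e^{(1)}\ge 0$ and $\epsilon_b\ge 0$.

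\emph{Nonemptiness and well-definedness.} Assume the invariant holds at step $t$. Then bucket $1$ satisfies the feasibility test in Eq.~(\ref{eq:feasible_bucket}), because
\begin{equation*}
\epsilon_x^{(1)}+(R_j-t)\epsilon_x^{(1)}=(R_j-t+1)\epsilon_x^{(1)}\le \epsilon_{x,j,t}^{\mathrm{rem}}.
\end{equation*}
Hence $1\in\mathcal K_{j,t}$ and $\mathcal K_{j,t}\neq\varnothing$. For Eq.~(\ref{eq:execution_bucket}):
\begin{itemize}
\item If $b_{j,t}\in\mathcal K_{j,t}$, the rule simply returns $b_{j,t}$.
\item Otherwise, the set $\{k\in\mathcal K_{j,t}\mid k\le b_{j,t}\}$ contains $1$, because $b_{j,t}\ge 1$. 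It is a nonempty finite set, so its maximum exists.
\end{itemize}
In both cases $k_{j,t}\in\mathcal K_{j,t}$. I expect this to be the only point needing care: the fallback must be shown to land inside $\mathcal K_{j,t}$, and that is exactly what the bucket-$1$ witness provides.

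\emph{Inductive step.} Because $k_{j,t}\in\mathcal K_{j,t}$,
\begin{equation*}
\epsilon_x^{(k_{j,t})}+(R_j-t)\epsilon_x^{(1)}\le \epsilon_{x,j,t}^{\mathrm{rem}}.
\end{equation*}
Eq.~(\ref{eq:remaining_transition_budget}) gives $\epsilon_{x,j,t+1}^{\mathrm{rem}}=\epsilon_{x,j,t}^{\mathrm{rem}}-\epsilon_x^{(k_{j,t})}$. Combining the two yields $\epsilon_{x,j,t+1}^{\mathrm{rem}}\ge (R_j-t)\epsilon_x^{(1)}$, which is the invariant at step $t+1$.

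\emph{Conclusion.} Apply the membership inequality at $t=R_j$ (if $R_j\ge 1$). It reads $\epsilon_x^{(k_{j,R_j})}\le \epsilon_{x,j,R_j}^{\mathrm{rem}}$, so $\epsilon_{x,j,R_j+1}^{\mathrm{rem}}\ge 0$. Unfolding the definition of the remaining allowance, this is exactly Eq.~(\ref{eq:sbs_accounting_cap}). If $R_j=0$, the sum is empty, and the bound $\epsilon_s+\epsilon_c\le B$ follows directly from admissibility together with $\epsilon_x^{(1)}\ge 0$.
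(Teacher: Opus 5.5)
Your proof is correct and follows the paper's argument essentially step for step. It uses the same invariant $\epsilon_{x,j,t}^{\mathrm{rem}}\ge(R_j-t+1)\epsilon_x^{(1)}$, bucket $1$ as the witness that both $\mathcal K_{j,t}$ and the fallback set are nonempty, and the membership inequality at $t=R_j$ unfolded into the cap; your explicit appeal to $\epsilon_x^{(1)}\ge 0$ in the base case and your separate handling of $R_j=0$ are small details the paper leaves implicit.
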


\begin{proof}
Since $R_j\le L_m-1$, the admissibility condition and Eq.~(\ref{eq:remaining_transition_budget}) give $\epsilon_{x,j,1}^{\mathrm{rem}}\ge R_j\epsilon_x^{(1)}$. We show inductively that $\epsilon_{x,j,t}^{\mathrm{rem}}\ge(R_j-t+1)\epsilon_x^{(1)}$. Under this invariant, bucket $1$ satisfies Eq.~(\ref{eq:feasible_bucket}), so $1\in\mathcal K_{j,t}$. Since $1\le b_{j,t}$, the fallback set in Eq.~(\ref{eq:execution_bucket}) is also nonempty. For $t<R_j$, any selected $k_{j,t}\in\mathcal K_{j,t}$ satisfies $\epsilon_x^{(k_{j,t})}+(R_j-t)\epsilon_x^{(1)}\le\epsilon_{x,j,t}^{\mathrm{rem}}$ by Eq.~(\ref{eq:feasible_bucket}), which gives $\epsilon_{x,j,t+1}^{\mathrm{rem}}\ge(R_j-t)\epsilon_x^{(1)}$ and preserves the invariant. At $t=R_j$, Eq.~(\ref{eq:feasible_bucket}) gives $\epsilon_x^{(k_{j,R_j})}\le\epsilon_{x,j,R_j}^{\mathrm{rem}}$; Eq.~(\ref{eq:remaining_transition_budget}) yields Eq.~(\ref{eq:sbs_accounting_cap}).
\end{proof}

\subsection{Complexity Analysis}

Let \(N_b=|\mathcal D^b|\) and \(R_{\rm tot}\) denote the numbers of SBSs and transition reports, respectively. Since each SBS contains at most \(L_m\) road segments, \(R_{\rm tot}\le N_b(L_m-1)\). On the client side, SBS partitioning, report construction, risk scoring, bucket scheduling, and privatization are linear in the number of SBSs and transition reports. Thus, the client-side cost is \(O(N_b+R_{\rm tot})=O(N_bL_m)\).

On the server side, aggregating privatized reports costs \(O(N_b+R_{\rm tot})\). Let \(\mathcal U\) be the set of current road segments and \(d_u=|\mathcal N(u)|\). With standard cubic-time solvers, bucket correction over \(K\) buckets costs \(O(|\mathcal U|K^3)\), conditional transition recovery costs \(O(\sum_{u\in\mathcal U}d_u^3)\), and start/segment-count recovery costs \(O(|\mathcal E_s|^3+|\mathcal C|^3)\). Therefore, the overall server-side processing cost is \(O(N_b+R_{\rm tot}+|\mathcal U|K^3+\sum_{u\in\mathcal U}d_u^3+|\mathcal E_s|^3+|\mathcal C|^3)\). Finally, generating \(N^\ast\) synthetic trajectories with average transition count \(\bar R^\ast\) costs \(O(N^\ast\bar R^\ast)\).
\section{Experiments}\label{sec:experiments}

\subsection{Experimental Setup}

\subsubsection{Datasets}
We evaluate \textsc{TrajRACE} on three road-network trajectory datasets. 

\begin{itemize}[leftmargin=5pt]
    \item \textbf{Porto}~\cite{moreira2013predicting} consists of over 1.7 million taxi trajectories from 442 vehicles in Porto, Portugal, collected from 2013 to 2014.

    \item \textbf{Oldenburg}~\cite{brinkhoff2002framework} is a synthetic moving-object dataset generated on the Oldenburg road network using Brinkhoff's generator. We generate 300,000 trajectories for our experiments.

    \item \textbf{Chengdu}~\cite{didiGAIA} contains 6.1 million ride-hailing vehicle trajectories from Chengdu, China, collected in November 2016.
\end{itemize}

For all datasets, we first map-match raw trajectories onto the public road network. We then partition the resulting road-segment trajectories into SBSs according to the public segment bound $L_m$. From these subtrajectories, we extract local reporting units and generate start, segment-count, and transition reports. Table~\ref{tab:dataset_stats} reports the dataset statistics.

\begin{table}[t]
\centering
\caption{Dataset statistics. \#Traj.: preprocessed trajectories; \#Seg.: road segments; \#Trans.: adjacent transitions in segment-bounded subtrajectories; $N^\ast$: synthetic trajectories.}
\label{tab:dataset_stats}
\vspace{-5pt}
\small
\setlength{\tabcolsep}{2.0pt}
\renewcommand{\arraystretch}{1.08}
\begin{tabular*}{\columnwidth}{@{\extracolsep{\fill}}lrrrrrr@{}}
\hline
Dataset & \#Traj. & \#Seg. & \#Trans. & Avg. Seg. & \(L_m\) & \(N^\ast\) \\
\hline
Porto     & 99,912 & 148,281 & 2,893,799 & 29.96 & 30 & 9,992 \\
Oldenburg & 99,037 & 34,963  & 2,027,416 & 21.47 & 30 & 10,000 \\
Chengdu   & 98,691 & 12,937  & 2,606,821 & 26.92 & 30 & 9,870 \\
\hline
\end{tabular*}
\vspace{-10pt}
\end{table}

\subsubsection{Baselines}
We compare \textsc{TrajRACE} with three representative local-privacy trajectory synthesis baselines.
\begin{itemize}[leftmargin=5pt]
    \item \textbf{LDPTrace}~\cite{du2023ldptrace} constructs a grid-level mobility model from locally perturbed trajectory reports and generates synthetic trajectories under local differential privacy.
    \item \textbf{RetraSyn$^{\mathrm{b}}$}~\cite{hu2024} is the budget-division variant of RetraSyn, which allocates privacy budgets across reporting timestamps for private trajectory stream synthesis.
    \item \textbf{RetraSyn$^{\mathrm{p}}$}~\cite{hu2024} is the population-division variant of RetraSyn, which partitions users into different reporting groups for private trajectory stream synthesis.
\end{itemize}
For the external baselines, we use matched nominal parameter values corresponding to the settings for \textsc{TrajRACE}, while following each method's native privacy/reporting formulation.

\subsubsection{Evaluation Metrics}
Following trajectory synthesis studies~\cite{du2023ldptrace,hu2024}, we evaluate five groups of utility metrics: 
(i) Density Error (Dens-Err) measures the difference between real and synthetic spatial density distributions. 
(ii) Query Error (Query-Err) evaluates spatial range-query error. 
(iii) Hotspot NDCG (Hot-NDCG) and Kendall Tau (K-Tau) measure whether high-density regions and their rankings are preserved. 
(iv) Trip Error (Trip-Err), Length Error (Len-Err), and Diameter Error (Diam-Err) evaluate trip-level and trajectory-shape fidelity. 
(v) Pattern F1 (Pattern-F1) measures whether frequent sequential mobility patterns are preserved. 
For Dens-Err, Query-Err, Trip-Err, Len-Err, and Diam-Err, lower values indicate better performance. For Hot-NDCG, K-Tau, and Pattern-F1, higher values indicate better performance.

Beyond these utility metrics, we define five task-specific diagnostic metrics for risk-aware protection analysis, conditional successor recovery, and transition-support consistency:
(i) High-risk Keep (HR-Keep) denotes the fraction of high-risk transition reports whose privatized successor equals the true successor. It measures how often the privatized successor directly retains the true successor, so lower values indicate less direct exposure. 
(ii) Overall Keep (All-Keep) measures the same keep rate over all transition reports. It checks whether high-risk protection comes from global over-perturbation. 
(iii) Transition Jensen--Shannon divergence (Trans-JS) measures the error of recovered conditional transition distributions. 
(iv) Bigram Jensen--Shannon divergence (Bigram-JS) measures the error of second-order transition statistics in synthetic trajectories. Lower values are better for Trans-JS and Bigram-JS. 
(v) Transition Support Ratio (TSR) measures the fraction of synthetic adjacent road-segment pairs that appear in the real transition support, where higher values indicate better agreement with observed movement patterns.

\subsubsection{Hyperparameters and Experimental Settings}
We evaluate \textsc{TrajRACE} under the SBS reporting-accounting cap $B\in\{0.5,1.0,1.5\}$ and use the corresponding nominal values for the external baselines under their native formulations. For \textsc{TrajRACE}, we set the number of privacy buckets to $K=3$, the shrinkage parameter to $\tau_\pi=20$, and the segment-count bound to $L_m=30$ for all datasets. The number of synthetic trajectories $N^\ast$ is reported in Table~\ref{tab:dataset_stats}. All experiments use fixed random seeds for reproducibility. The methods use different trajectory representations. LDPTrace and RetraSyn operate on grid-cell trajectories, while \textsc{TrajRACE} operates on road-segment trajectories and generates road-network-consistent segment sequences. For each dataset, all methods are evaluated on the same trajectory subset, under the same nominal parameter value, and with the same number of synthetic trajectories. For utility evaluation, we convert all synthetic outputs into a shared grid-level representation and compute the common utility metrics defined above.
All experiments are conducted on a server with an Intel(R) Xeon(R) W-2155 CPU, 128GB memory, and an NVIDIA TITAN RTX GPU.

\begin{table*}[t]
\centering
\caption{Overall utility comparison under matched nominal settings (best in bold and second-best underlined).}
\label{tab:overall_utility}
\vspace{-5pt}
\scriptsize
\setlength{\tabcolsep}{2.6pt}
\renewcommand{\arraystretch}{1.0}
\resizebox{\textwidth}{!}{
\begin{tabular}{cl|ccc|ccc|ccc}
\hline
\multirow{2}{*}{Metric} & \multirow{2}{*}{Method}
& \multicolumn{3}{c|}{Porto}
& \multicolumn{3}{c|}{Oldenburg}
& \multicolumn{3}{c}{Chengdu} \\
\cline{3-11}
& & \makebox[27.5pt][c]{\(0.5\)}
    & \makebox[27.5pt][c]{\(1.0\)}
    & \makebox[27.5pt][c]{\(1.5\)}
& \makebox[27.5pt][c]{\(0.5\)}
    & \makebox[27.5pt][c]{\(1.0\)}
    & \makebox[27.5pt][c]{\(1.5\)}
& \makebox[27.5pt][c]{\(0.5\)}
    & \makebox[27.5pt][c]{\(1.0\)}
    & \makebox[27.5pt][c]{\(1.5\)} \\
\hline

\multirow{4}{*}{Dens-Err \(\downarrow\)}
& LDPTrace & 0.5741 & \underline{0.4383} & 0.4950 & 0.3394 & 0.3359 & 0.3321 & 0.1049 & 0.0898 & 0.1437 \\
& RetraSyn$^{\mathrm{b}}$ & 0.5403 & 0.5238 & 0.5144 & \underline{0.3009} & 0.3116 & 0.2847 & 0.0624 & 0.0592 & 0.0587 \\
& RetraSyn$^{\mathrm{p}}$ & \underline{0.5036} & 0.4576 & \underline{0.4275} & 0.3318 & \underline{0.2794} & \underline{0.2759} & \underline{0.0577} & \underline{0.0538} & \underline{0.0462} \\
& \textsc{TrajRACE} & \textbf{0.2862} & \textbf{0.2561} & \textbf{0.1855} & \textbf{0.1582} & \textbf{0.1472} & \textbf{0.1463} & \textbf{0.0265} & \textbf{0.0176} & \textbf{0.0195} \\
\hline

\multirow{4}{*}{Query-Err \(\downarrow\)}
& LDPTrace & \underline{0.9174} & \underline{0.8689} & \underline{0.8039} & \underline{0.5659} & \underline{0.5296} & \underline{0.5675} & 0.9082 & 0.8981 & 0.8839 \\
& RetraSyn$^{\mathrm{b}}$ & 1.0983 & 1.0573 & 0.8579 & 0.8329 & 0.8239 & 0.8057 & \underline{0.8706} & \underline{0.8684} & \underline{0.8695} \\
& RetraSyn$^{\mathrm{p}}$ & 1.1023 & 0.9345 & 0.8136 & 0.8304 & 0.8472 & 0.7604 & 0.8739 & 0.8726 & 0.8712 \\
& \textsc{TrajRACE} & \textbf{0.6007} & \textbf{0.5861} & \textbf{0.5964} & \textbf{0.2884} & \textbf{0.2750} & \textbf{0.2718} & \textbf{0.2998} & \textbf{0.2477} & \textbf{0.2375} \\
\hline

\multirow{4}{*}{Hot-NDCG \(\uparrow\)}
& LDPTrace & 0.4127 & 0.4632 & 0.5109 & \underline{0.3645} & \underline{0.3812} & 0.4103 & 0.2673 & 0.3144 & 0.3582 \\
& RetraSyn$^{\mathrm{b}}$ & 0.3784 & 0.4155 & 0.4891 & 0.1741 & 0.2105 & 0.2458 & 0.2215 & 0.2761 & 0.3304 \\
& RetraSyn$^{\mathrm{p}}$ & \textbf{0.6214} & \underline{0.6753} & \underline{0.7088} & 0.2956 & 0.3472 & \underline{0.4215} & \underline{0.3528} & \underline{0.4196} & \underline{0.4721} \\
& \textsc{TrajRACE} & \underline{0.5982} & \textbf{0.7516} & \textbf{0.8347} & \textbf{0.4218} & \textbf{0.5524} & \textbf{0.6133} & \textbf{0.4935} & \textbf{0.6012} & \textbf{0.6845} \\
\hline

\multirow{4}{*}{K-Tau \(\uparrow\)}
& LDPTrace & -0.0824 & 0.1160 & 0.0896 & 0.1833 & 0.1746 & 0.1492 & 0.2973 & 0.2960 & 0.3444 \\
& RetraSyn$^{\mathrm{b}}$ & -0.0251 & 0.1381 & 0.1572 & 0.1738 & 0.0956 & 0.2925 & 0.4018 & \underline{0.4585} & 0.3526 \\
& RetraSyn$^{\mathrm{p}}$ & \underline{0.3262} & \textbf{0.5238} & \underline{0.5634} & \underline{0.3081} & \underline{0.3221} & \underline{0.3046} & \underline{0.4162} & 0.4442 & \underline{0.5090} \\
& \textsc{TrajRACE} & \textbf{0.5814} & \underline{0.5167} & \textbf{0.6122} & \textbf{0.6782} & \textbf{0.7072} & \textbf{0.7051} & \textbf{0.7433} & \textbf{0.8036} & \textbf{0.8247} \\
\hline

\multirow{4}{*}{Trip-Err \(\downarrow\)}
& LDPTrace & 0.9419 & 0.9146 & \underline{0.7930} & 0.8341 & 0.8176 & 0.7918 & 0.3033 & 0.3184 & 0.3145 \\
& RetraSyn$^{\mathrm{b}}$ & 0.9475 & 0.9368 & 0.9262 & \underline{0.6917} & 0.7151 & \underline{0.7020} & 0.3783 & 0.3712 & 0.3802 \\
& RetraSyn$^{\mathrm{p}}$ & \underline{0.9133} & \underline{0.8627} & 0.8279 & 0.7094 & \underline{0.7024} & 0.7241 & \textbf{0.2523} & \underline{0.2722} & \textbf{0.2446} \\
& \textsc{TrajRACE} & \textbf{0.4685} & \textbf{0.4364} & \textbf{0.4634} & \textbf{0.5671} & \textbf{0.5608} & \textbf{0.5570} & \underline{0.2602} & \textbf{0.2529} & \underline{0.2473} \\
\hline

\multirow{4}{*}{Len-Err \(\downarrow\)}
& LDPTrace & \underline{0.4646} & \underline{0.4882} & \underline{0.4308} & 0.6114 & 0.6083 & 0.5972 & \underline{0.2282} & \textbf{0.2254} & \underline{0.2651} \\
& RetraSyn$^{\mathrm{b}}$ & 0.8643 & 0.8602 & 0.8402 & \underline{0.4909} & \underline{0.4616} & \underline{0.4792} & 0.2838 & 0.2909 & 0.2840 \\
& RetraSyn$^{\mathrm{p}}$ & 0.8666 & 0.8644 & 0.8597 & 0.4925 & 0.4972 & 0.4802 & 0.2914 & 0.2925 & 0.2787 \\
& \textsc{TrajRACE} & \textbf{0.3476} & \textbf{0.2887} & \textbf{0.2929} & \textbf{0.2189} & \textbf{0.2018} & \textbf{0.1778} & \textbf{0.2135} & \underline{0.2469} & \textbf{0.2543} \\
\hline

\multirow{4}{*}{Diam-Err \(\downarrow\)}
& LDPTrace & \underline{0.7806} & \underline{0.7231} & \underline{0.6478} & 0.4363 & 0.4232 & 0.4171 & 0.4275 & 0.4065 & 0.4047 \\
& RetraSyn$^{\mathrm{b}}$ & 0.9527 & 0.9451 & 0.9277 & \underline{0.1458} & \underline{0.1267} & 0.1398 & 0.2206 & 0.2198 & 0.2142 \\
& RetraSyn$^{\mathrm{p}}$ & 0.9583 & 0.9514 & 0.9593 & 0.1471 & 0.1523 & \underline{0.1375} & \underline{0.2156} & \underline{0.2173} & \underline{0.2117} \\
& \textsc{TrajRACE} & \textbf{0.4504} & \textbf{0.4665} & \textbf{0.4448} & \textbf{0.1080} & \textbf{0.0987} & \textbf{0.0919} & \textbf{0.1228} & \textbf{0.1096} & \textbf{0.0987} \\
\hline

\multirow{4}{*}{Pattern-F1 \(\uparrow\)}
& LDPTrace & 0.1738 & \underline{0.2215} & 0.1942 & 0.4012 & 0.4128 & 0.4256 & 0.4815 & 0.4423 & 0.4716 \\
& RetraSyn$^{\mathrm{b}}$ & 0.1921 & 0.1645 & 0.1687 & 0.2214 & 0.2345 & 0.2381 & \underline{0.6312} & \underline{0.6245} & 0.6087 \\
& RetraSyn$^{\mathrm{p}}$ & \underline{0.2314} & 0.2156 & \underline{0.2863} & \underline{0.4211} & \underline{0.4367} & \underline{0.4523} & 0.5621 & 0.6134 & \underline{0.6415} \\
& \textsc{TrajRACE} & \textbf{0.3621} & \textbf{0.3845} & \textbf{0.4218} & \textbf{0.5318} & \textbf{0.5372} & \textbf{0.5641} & \textbf{0.6934} & \textbf{0.7028} & \textbf{0.7412} \\
\hline

\end{tabular}
}
\vspace{-10pt}
\end{table*}

\subsection{Comparison}

Table~\ref{tab:overall_utility} reports the overall synthetic-data utility on Porto, Oldenburg, and Chengdu at nominal settings \(0.5\), \(1.0\), and \(1.5\). \textsc{TrajRACE} consistently achieves the best or near-best results across most datasets, nominal settings, and metrics. Compared with the strongest baseline in each setting, 
\textsc{TrajRACE} reduces Dens-Err, Query-Err, and Diam-Err by at least 41.6\%, 25.8\%, and 22.1\%, respectively, and improves Pattern-F1 by at least 9.8\%.
These gains stem from road-network constraints and recovered mobility statistics. We make two observations.

First, for distribution metrics, including Dens-Err, Query-Err, Hot-NDCG, and K-Tau, which measure spatial density, range-query accuracy, and hotspot/ranking preservation, \textsc{TrajRACE} performs stably across the three datasets. The advantage is clear even at the smallest nominal setting \(0.5\), showing that the recovered start, segment-count, and conditional transition statistics remain useful under privacy protection. This is because \textsc{TrajRACE} keeps the current road segment as aggregation context and recovers conditional successor distributions, instead of treating movements as grid-cell visits.

Second, for structure metrics, including Trip-Err, Len-Err, Diam-Err, and Pattern-F1, which capture trip-level fidelity, segment-count length, trajectory extent, and frequent sequential patterns, the results show dataset-dependent behaviors. For example, RetraSyn$^{\mathrm{p}}$ slightly outperforms \textsc{TrajRACE} on several Trip-Err results for Chengdu, while LDPTrace performs slightly better on one Len-Err setting on Chengdu. This suggests that baselines can preserve certain single-aspect structural statistics, but their performance is less stable across datasets and structural metrics. In contrast, \textsc{TrajRACE} benefits from legal successor constraints and conditional next-hop modeling, which better preserve trajectory structures.

\subsection{Effect of Risk-Aware Protection}

\begin{table*}[t]
\centering
\caption{Effect of risk-aware protection strategies on protection and recovery quality at \(B=1.0\).}
\label{tab:riskaware_protection}
\vspace{-5pt}
\small
\setlength{\tabcolsep}{12pt}
\renewcommand{\arraystretch}{1.0}
\begin{tabular}{llccccc}
\hline
Dataset & Method
& HR-Keep $\downarrow$
& All-Keep
& Trans-JS $\downarrow$
& Bigram-JS $\downarrow$
& TSR $\uparrow$ \\
\hline

\multirow{3}{*}{Porto}
& \textsc{TrajRACE}-uniform
& 0.5542 & 0.3451 & 0.3982 & 0.5417 & 0.7025 \\
& PublicDegree-NU
& \underline{0.4386} & 0.3394 & \underline{0.3870} & \underline{0.5258} & \textbf{0.7253} \\
& \textsc{TrajRACE}
& \textbf{0.4215} & 0.3385 & \textbf{0.3754} & \textbf{0.5203} & \underline{0.7241} \\
\hline

\multirow{3}{*}{Oldenburg}
& \textsc{TrajRACE}-uniform
& 0.7831 & 0.4012 & 0.3315 & 0.2684 & 0.8762 \\
& PublicDegree-NU
& \underline{0.6687} & 0.3959 & \underline{0.3200} & \underline{0.2456} & \underline{0.8890} \\
& \textsc{TrajRACE}
& \textbf{0.6124} & 0.3948 & \textbf{0.3087} & \textbf{0.2415} & \textbf{0.8973} \\
\hline

\multirow{3}{*}{Chengdu}
& \textsc{TrajRACE}-uniform
& 0.6953 & 0.4428 & 0.3364 & 0.4612 & 0.7104 \\
& PublicDegree-NU
& \underline{0.5729} & 0.4324 & \underline{0.3230} & \underline{0.4351} & \textbf{0.7471} \\
& \textsc{TrajRACE}
& \textbf{0.5418} & 0.4315 & \textbf{0.3128} & \textbf{0.4336} & \underline{0.7358} \\
\hline

\end{tabular}
\vspace{-10pt}
\end{table*}

We study risk-aware scheduling using three controlled variants with the same SBS segmentation, reporting units, reporting-accounting rule, and recovery pipeline. \textsc{TrajRACE}-uniform uses uniform transition perturbation without risk scores; PublicDegree-NU allocates protection using only public current-segment out-degree; and \textsc{TrajRACE} further incorporates endpoint proximity, long-stay association, and low out-degree. Table~\ref{tab:riskaware_protection} reports the results.

Compared with \textsc{TrajRACE}-uniform, \textsc{TrajRACE} reduces HR-Keep by 21.8\%--23.9\%, while All-Keep changes by at most 0.012. PublicDegree-NU reduces HR-Keep, showing that public topology provides guidance for non-uniform protection. Compared with PublicDegree-NU, \textsc{TrajRACE} achieves lower Trans-JS and Bigram-JS on all datasets. Both non-uniform variants also achieve higher TSR than \textsc{TrajRACE}-uniform on all datasets. Together, these results suggest that incorporating context-level risk factors can further improve the balance between protection and mobility recovery.

\subsection{Effect of Risk Factors}

We evaluate the effect of the three risk factors used in \textsc{TrajRACE}: endpoint proximity, long-stay association, and low out-degree. The experiment is conducted on Chengdu under $B=1.0$. We compare the full model with three ablated variants: w/o End. removes endpoint-proximity risk, w/o Stay removes long-stay association risk, and w/o Deg. removes low-out-degree risk. We also include Uniform as a reference that uses the same perturbation strength for all transition units. Table~\ref{tab:risk_ablation} reports the results. 

\begin{table}[t]
\centering
\caption{Ablation study on risk components (Chengdu, \(B=1.0\)).}
\label{tab:risk_ablation}
\vspace{-5pt}
\small
\setlength{\tabcolsep}{2.5pt}
\renewcommand{\arraystretch}{1.0}
\begin{tabular}{lccccc}
\hline
Method
& \shortstack{HR-Keep \(\downarrow\)}
& \shortstack{All-Keep}
& \shortstack{Trans-JS \(\downarrow\)}
& \shortstack{Bigram-JS \(\downarrow\)}
& \shortstack{TSR \(\uparrow\)} \\
\hline
Full      & \textbf{0.5418} & 0.4315 & \textbf{0.3128} & \textbf{0.4336} & \textbf{0.7358} \\
w/o End.  & 0.6427 & 0.4382 & 0.3155 & 0.4381 & 0.7312 \\
w/o Stay  & 0.5935 & 0.4354 & 0.3142 & 0.4504 & 0.7335 \\
w/o Deg.  & 0.5712 & 0.4341 & 0.3287 & 0.4415 & 0.7086 \\
Uniform   & 0.6953 & 0.4428 & 0.3364 & 0.4612 & 0.7104 \\
\hline
\end{tabular}
\vspace{-10pt}
\end{table}

The full model achieves the best results on the primary diagnostic metrics, including HR-Keep, Trans-JS, Bigram-JS, and TSR. 
All-Keep changes only mildly across variants, suggesting that the differences mainly come from risk-factor scheduling rather than global changes in transition keep rates. 
The ablated variants further show the role of each risk factor. 
Removing endpoint proximity increases HR-Keep from 0.5418 to 0.6427, indicating that endpoint proximity helps identify and protect sensitive transitions near the start or destination of an SBS. 
Removing long-stay association gives the largest Bigram-JS among the ablated variants, suggesting that stay-related risk helps preserve second-order movement patterns. 
Removing low-out-degree risk causes the largest degradation in Trans-JS and TSR among the ablated variants: Trans-JS increases from 0.3128 to 0.3287, and TSR decreases from 0.7358 to 0.7086. 
This shows its role in preserving empirically supported successor patterns. 
The Uniform variant performs worse than the full model, indicating that using the same successor perturbation strength for all transition units is less effective than adapting protection to local transition risks.

\subsection{Effect of Mobility Statistics Recovery}
We evaluate the effect of the recovery modules used in \textsc{TrajRACE}. The experiment is conducted on Chengdu under $B=1.0$. All variants use the same privatized reports and only differ in how they recover the population-level mobility statistics $\widehat{\Phi}$. 
We compare four variants: Full, which uses the complete recovery pipeline; w/o Bucket Corr., which removes bucket correction for noisy bucket labels; w/o Shrinkage, which removes local-global shrinkage; and Single-channel, which uses one fixed perturbation channel instead of the mixed channel. 
Table~\ref{tab:recovery_ablation} reports the results. The full model achieves the best results on Trans-JS, Bigram-JS, and TSR. This shows that the complete recovery pipeline improves conditional transition recovery, preserves second-order movement patterns, and better follows observed transition support. Since transition reports are affected by both successor perturbation and noisy bucket reporting, directly recovering transition statistics without correcting these effects leads to larger recovery bias.

\begin{table}[t]
\centering
\caption{Recovery-module ablation on Chengdu at \(B=1.0\).}
\label{tab:recovery_ablation}
\vspace{-5pt}
\small
\setlength{\tabcolsep}{2pt}
\renewcommand{\arraystretch}{1}
\begin{tabular}{lccc}
\hline
Variant
& \shortstack{Trans-JS \(\downarrow\)}
& \shortstack{Bigram-JS \(\downarrow\)}
& \shortstack{TSR \(\uparrow\)} \\
\hline
Full              & \textbf{0.3128} & \textbf{0.4336} & \textbf{0.7358} \\
w/o Bucket Corr.  & 0.3275 & 0.4402 & 0.7281 \\
w/o Shrinkage     & 0.3186 & 0.4358 & 0.7295 \\
Single-channel    & 0.3224 & 0.4451 & 0.7233 \\
\hline
\end{tabular}
\vspace{-10pt}
\end{table}

The ablated variants show the role of each recovery module. 
Removing bucket correction increases Trans-JS from 0.3128 to 0.3275, showing the need to correct bucket-label distributions before conditional successor recovery. 
Removing shrinkage degrades all three metrics, indicating that local-global shrinkage stabilizes sparse current-segment estimates. 
Single-channel performs worst on Bigram-JS and TSR because a fixed successor-perturbation channel cannot model heterogeneous execution buckets under the current segment.

\subsection{Hyperparameter Sensitivity}

\noindent\textbf{Effect of the shrinkage parameter.}
The shrinkage parameter $\tau_\pi$ controls how much the context-specific bucket estimate is smoothed toward the global bucket estimate during recovery. Fig.~\ref{fig:tau_sensitivity} reports the results under $\tau_\pi\in\{5,20,50\}$. \textsc{TrajRACE} changes only slightly across different $\tau_\pi$ values on all three metrics, showing that it is not highly sensitive to this hyperparameter. A smaller $\tau_\pi$ sometimes gives lower Trans-JS because it keeps more local transition information. However, $\tau_\pi=20$ gives more stable Bigram-JS and TSR, suggesting that moderate smoothing helps preserve second-order transition structures and empirical transition-support consistency. When $\tau_\pi$ increases to 50, several metrics degrade slightly. This is because excessive smoothing weakens local transition patterns. 

\begin{figure}[t]
\centering
\includegraphics[width=\columnwidth]{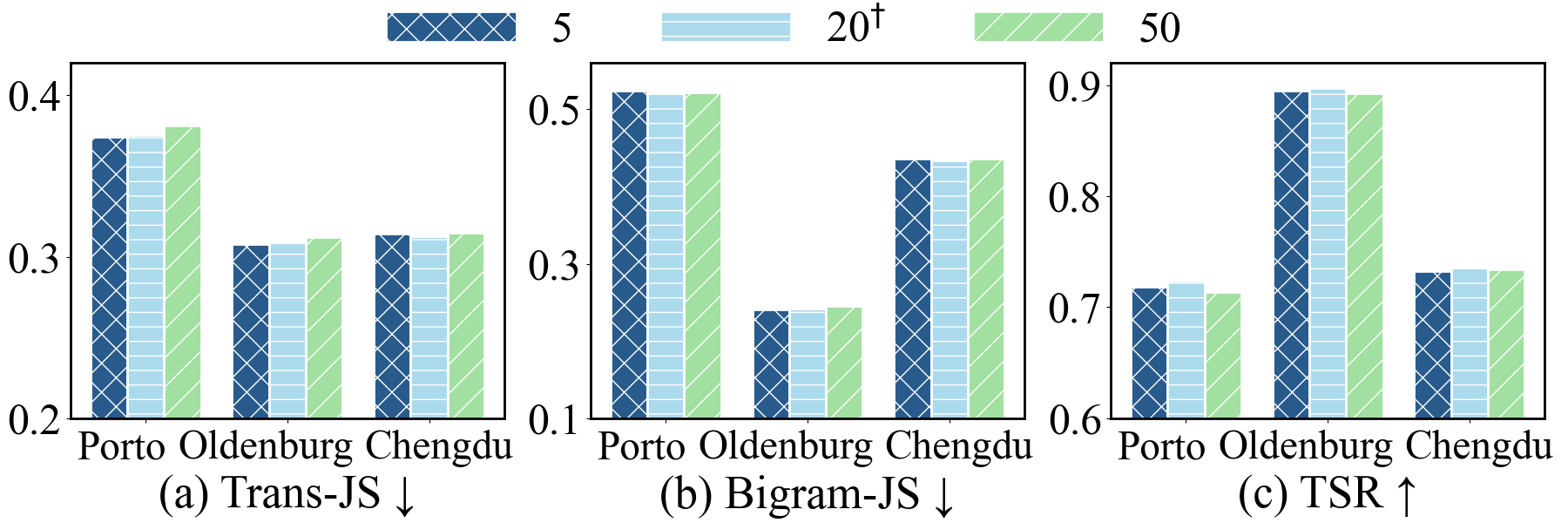}
\vspace{-20pt}
\caption{Effect of the shrinkage parameter \(\tau_\pi\).}
\vspace{-10pt}
\label{fig:tau_sensitivity}
\end{figure}

\noindent\textbf{Effect of risk weights.}
The risk weights control the importance of endpoint proximity, long-stay association, and low out-degree in the transition risk score. Fig.~\ref{fig:risk_weight_sensitivity} reports the results under four pre-normalized weight settings: Bal. $(1\!:\!1\!:\!1)$, End. $(2\!:\!1\!:\!1)$, Stay $(1\!:\!2\!:\!1)$, and Deg. $(1\!:\!1\!:\!2)$. HR-Keep, Trans-JS, and Bigram-JS vary only slightly across these settings, indicating that \textsc{TrajRACE} is not highly sensitive to the manual choice of risk weights. Degree-heavy usually reduces HR-Keep because low-out-degree segments expose fewer legal next-hop choices and need stronger protection. Endpoint-heavy and Stay-heavy improve Trans-JS and Bigram-JS on some datasets because endpoint and stay information helps identify sensitive local movements. The balanced setting does not always give the best value for every metric, but it provides stable overall performance across datasets.

\begin{figure}[t]
\centering
\includegraphics[width=\columnwidth]{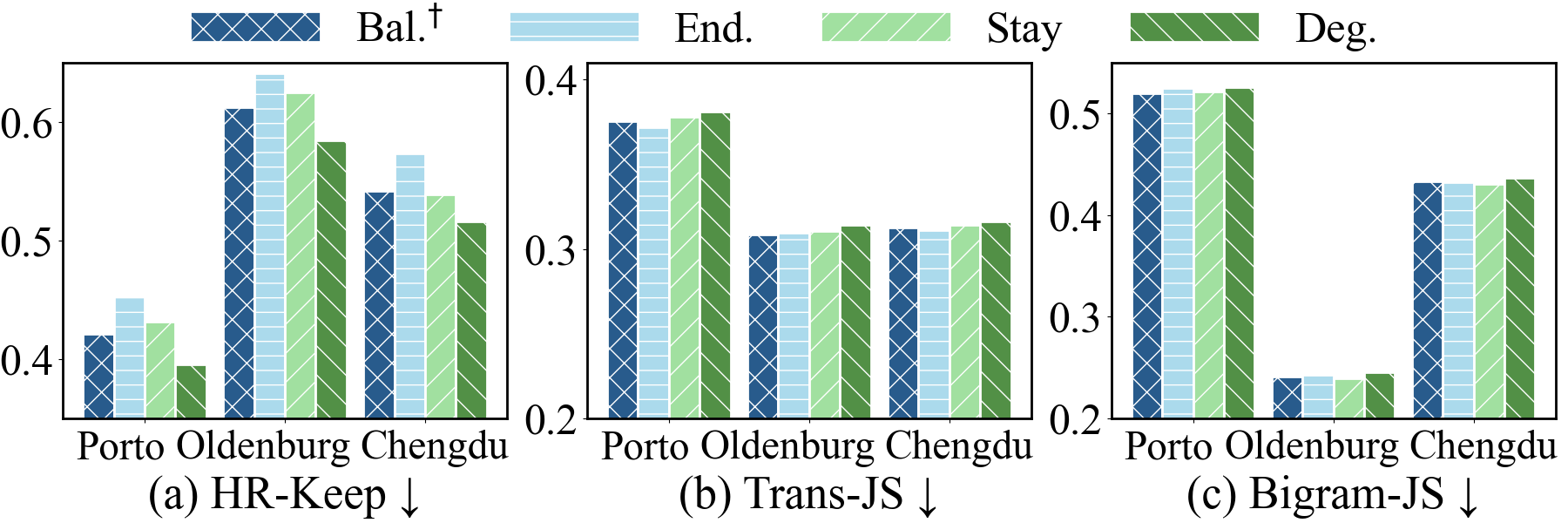}
\vspace{-20pt}
\caption{Effect of relative risk weights.}
\vspace{-10pt}
\label{fig:risk_weight_sensitivity}
\end{figure}

\noindent\textbf{Effect of bucket number.}
The bucket number $K$ controls the granularity of transition protection levels. Fig.~\ref{fig:k_sensitivity} reports the results under $K\in\{2,3,4\}$. The performance changes only moderately across different $K$, suggesting that \textsc{TrajRACE} is stable under different protection granularities. When $K=2$, Trans-JS is often lower, but HR-Keep is the highest on all three datasets. This is because coarse buckets make recovery more stable but weaken the separation between high-risk and low-risk transitions. When $K=4$, High-risk Keep is consistently the lowest, but Trans-JS and Bigram-JS are slightly worse in most cases. This is because fine-grained buckets separate risks better but make the recovery problem sparser. The default setting $K=3$ provides a balanced trade-off between high-risk protection and recovery quality.

\begin{figure}[t]
\centering
\includegraphics[width=\columnwidth]{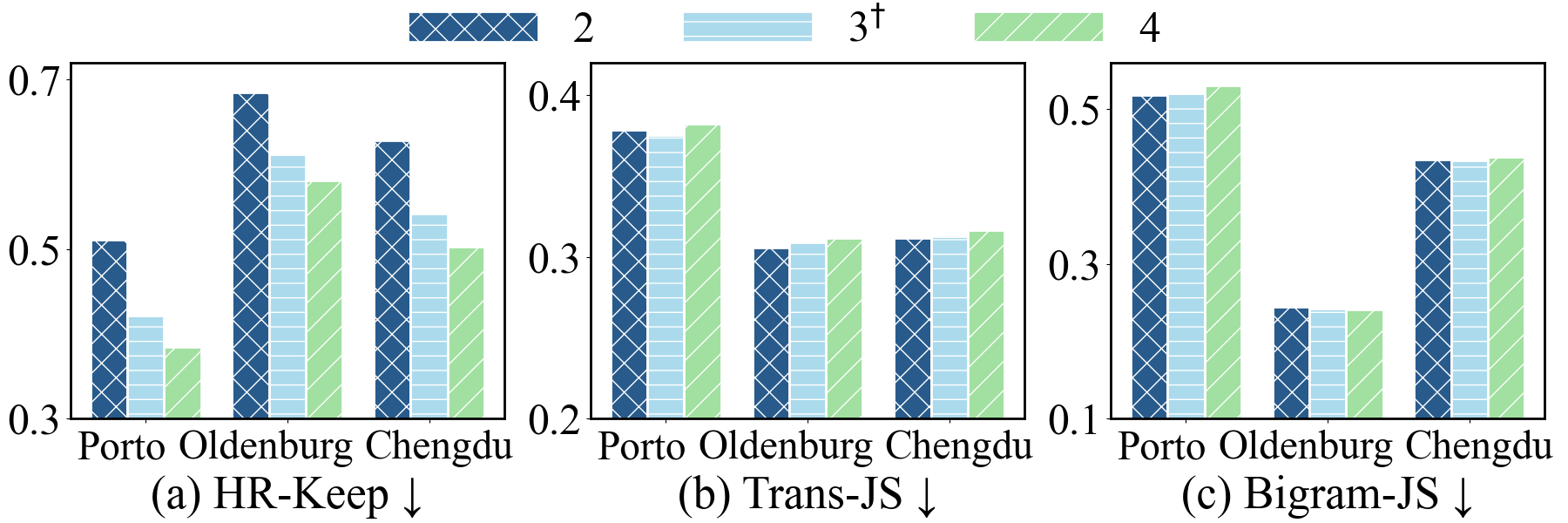}
\vspace{-20pt}
\caption{Effect of the number of privacy buckets \(K\).}
\vspace{-10pt}
\label{fig:k_sensitivity}
\end{figure}

\subsection{Efficiency Analysis}
Fig.~\ref{fig:efficiency} reports the runtime of all methods on Porto, Oldenburg, and Chengdu at the nominal setting \(1.0\). For the baselines, we report the end-to-end runtime of their implemented pipelines, while for \textsc{TrajRACE}, we further report client-side risk-aware reporting, server-side recovery, and synthetic trajectory generation. Trace, Syn$^{\mathrm b}$, and Syn$^{\mathrm p}$ denote LDPTrace, RetraSyn$^{\mathrm b}$, and RetraSyn$^{\mathrm p}$, respectively.

\textsc{TrajRACE} takes 379.22s, 402.95s, and 466.07s on Porto, Oldenburg, and Chengdu, respectively. Its runtime mainly comes from client-side risk-aware reporting, which computes transition risks, assigns privacy buckets, and generates privatized reports. In contrast, server-side recovery remains lightweight, taking only 54.24s, 26.18s, and 41.27s, respectively. Although \textsc{TrajRACE} is not the fastest method, its total runtime remains within several minutes on all datasets and is practical for offline privacy-preserving trajectory synthesis.

\begin{figure}[t]
\centering
\includegraphics[width=1.0\columnwidth]{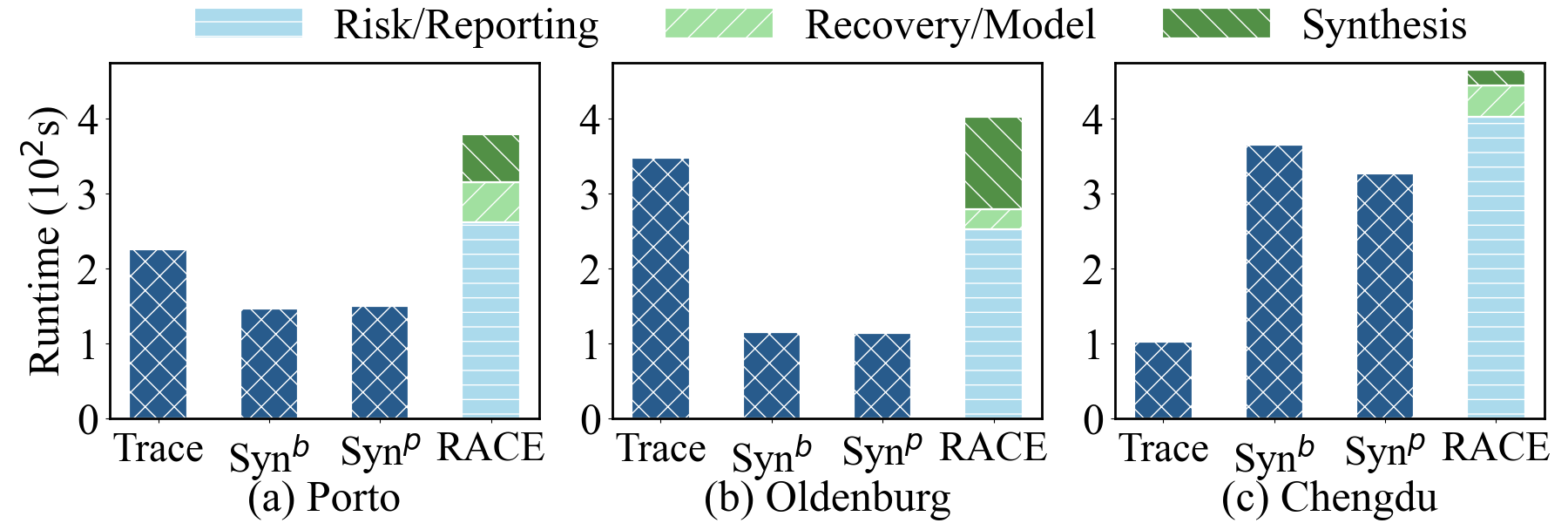}
\vspace{-20pt}
\caption{Efficiency comparison across datasets at the nominal setting \(1.0\).}
\vspace{-10pt}
\label{fig:efficiency}
\end{figure}

\section{Conclusion}\label{sec:conclusion}

We proposed \textsc{TrajRACE}, a risk-aware event-level local privacy framework for road-network trajectory synthesis under an untrusted server. It protects true successors under public current-segment contexts and combines risk-aware protection with bucket-corrected recovery. Experiments on Porto, Oldenburg, and Chengdu show that \textsc{TrajRACE} improves synthetic-data utility over representative baselines and reduces HR-Keep by 21.8\%--23.9\% relative to the uniform variant.

Several limitations remain. The privacy guarantee is conditional on a disclosed current segment and applies to individual transition reports, while $B$ is only an SBS reporting-accounting cap; multi-SBS composition remains future work. Recovery uses a context-level bucket-mixture approximation, which may affect accuracy but not the conditional privacy guarantee. Future work will study multi-SBS composition, current-segment protection, more exact recovery, online synthesis, and sequence-aware risk models.

\section*{Acknowledgment}
This work was supported in part by the National Natural Science Foundation of China (Grants No. 62272379, T2341003, and 62441227), the Natural Science Basic Research Program of Shaanxi Province (Grant No. 2025JC-JCQN-081), the XJTU Research Fund for AI Science (Grant No. 2025YXYC004), and the Youth Innovation Team of Shaanxi Universities.

\section*{AI-Generated Content Acknowledgment}
During the preparation of this work, the authors used ChatGPT solely for language editing and grammar enhancement. The authors reviewed the output and take full responsibility for the final manuscript.

\bibliographystyle{IEEEtran}
\bibliography{refs_icde}
\balance

\end{document}